\def\BibTeX{{\rm B\kern-.05em{\sc i\kern-.025em b}\kern-.08em
    T\kern-.1667em\lower.7ex\hbox{E}\kern-.125emX}}
\newtheorem{assumption}{Assumption}
\newtheorem{theorem}{Theorem}
\newtheorem{remark}{Remark}
\newtheorem{lemma}{Lemma}
\newtheorem{problem}{Problem}
\DeclareMathOperator{\DIAG}{diag}
\DeclareMathOperator{\VEC}{vec}
\DeclareMathOperator{\Poly}{poly}
\DeclareMathOperator{\TR}{Tr}
\newcommand{\norm}[1]{ \| #1 \| }
\newcommand{\Exp}[1]{\mathbb{E}\big[ #1\big]}
\newcommand{\NF}[1]{\| #1\|_F}  
\newcommand{\barF}{{{}\bar F_n}}
\newcommand{\DeltaF}{{{} F_n}}
\newcommand{\barK}{{{}\bar K_n}}
\newcommand{\DeltaK}{{{} K_n}}
\newcommand{\Compress}{\medmuskip=0mu
\thinmuskip=0mu
\thickmuskip=0mu}
\tikzset{%
  every neuron/.style={
    circle,
    draw,
    minimum size=.7cm
  },
  neuron missing/.style={
    draw=none, 
    scale=3,
    text height=0.333cm,
    execute at begin node=\color{black}$\vdots$
  },
}
\definecolor{red}{RGB}{187,0,0}
\definecolor{blue}{RGB}{0, 0,180}
\definecolor{pink}{RGB}{203, 76, 178}
\title{\LARGE \bf Reinforcement Learning in Nonzero-sum  Linear Quadratic Deep Structured Games: Global Convergence of Policy  Optimization}
\author{Masoud Roudneshin, Jalal Arabneydi and Amir G. Aghdam
\thanks{This work is supported in part by the Natural Sciences and Engineering Research Council of Canada (NSERC) under Grant RGPIN-262127-17.}  
\thanks{Masoud Roudneshin, Jalal Arabneydi, and Amir G. Aghdam are with the  Department of Electrical and Computer Engineering, 
        Concordia University, 1455 de Maisonneuve Blvd. West, Montreal, QC, Canada, Postal Code: H3G 1M8.  Email: {\tt\small m\_roundne@encs.concordia.ca}, {\tt\small jalal.arabneydi@mail.mcgill.ca},        
        {\tt\small aghdam@ece.concordia.ca}}%
}
\begin{document}
\maketitle

\vspace*{-5cm}{\footnotesize{Proceedings of IEEE  Conference on Decision and Control, 2020.}}
\vspace*{3.85cm}

\thispagestyle{empty}
\pagestyle{empty}
\begin{abstract}
We  study  model-based and model-free policy optimization in  a class of nonzero-sum stochastic dynamic  games called  linear quadratic (LQ) deep structured  games. In such games,  players interact with each other through a set of weighted averages (linear regressions)  of  the states and actions. In this paper, we focus  our attention to homogeneous weights; however,  for the special case of infinite population, the obtained results extend to asymptotically vanishing weights wherein the players learn the  sequential weighted mean-field equilibrium.  Despite the non-convexity of the optimization in policy space and the fact that policy optimization  does not generally converge in game setting,   we prove  that the proposed  model-based and model-free policy gradient descent and natural policy gradient descent algorithms  globally converge to the  sub-game  perfect Nash equilibrium.    To the best of our knowledge, this is  the first  result  that provides a global convergence proof of policy optimization  in  a nonzero-sum  LQ game. One of the salient  features of the proposed  algorithms  is that  their parameter space  is independent of the number of players, and when  the dimension of  state space is significantly larger than that of the action space,  they  provide a more efficient way of computation compared to  those algorithms that plan and learn in the action space.  Finally, some simulations are  provided to numerically verify the obtained  theoretical results. 
\end{abstract}

%

\section{Introduction}

In recent years, there has been a growing interest in  the application of reinforcement learning (RL) algorithms  in networked control systems. One of  the most popular reinforcement learning (RL) algorithms in practice  is policy gradient, due to  its stability and fast convergence. However, from the theoretical point of view,  there is not much known about it.  Recently, it is shown in~\cite{fazel2018global} that a single-agent linear quadratic (LQ) optimal control  problem enjoys the global convergence, despite the fact that the optimization problem is not convex in the policy space. A similar result is obtained  for zero-sum LQ games in~\cite{zhang2019policy}. On the other hand, a nonzero-sum LQ game is  more challenging than the  above problems,
  where the existing results on the global (or even local) convergence of the policy gradient methods  are generally not encouraging~\cite{mazumdar2019policy}.  

Inspired by recent developments in deep structured teams and games~\cite{Jalal2019MFT,Jalal2019Automatica,Jalal2019risk,Jalal2020Nash,Jalal2020CCTA,Vida2020CDC},  we study a class of LQ games wherein  the  effect of other  players on any individual player is characterized by a linear regression of the states and actions of all players. The closest field of research to  deep structured games  is mean-field games~\cite{Caines2018book}. In a classical LQ mean-field game, one often has: (a)  homogeneous individual weights (i.e., players are equally important); (b)  the number of players~$n$ is asymptotically large with independent  primitive random variables (to be able to predict the trajectory of the mean-field using the strong law of large numbers); (c) the coupling is through the mean of the states, where the control coupling (called extended coupling) is  more challenging;  (d) the proof technique revolves  around the fact that the effect of a  single player on others is negligible,   reducing the game to a coupled  forward-backward  optimal control problem; (e) the solution concept is Nash equilibrium; (f)  given some fixed-point conditions across  the time horizon,  the forward-backward equation admits a solution leading to  an approximate Nash in the finite-population game;  (g) they are often not  practical for long-horizon  and reinforcement learning applications  wherein the  common practice  is to adopt a weaker solution  concept called stationary Nash equilibrium (where the trajectory of the mean-field is stationary), and (h)  since the results are  asymptotic,  the models are limited to those  that are  uniformly  bounded in~$n$. In contrast to mean-field game,  LQ deep structured game often has: (a')  heterogeneous individual weights that are not necessarily homogeneous; (b')  the number of players  is arbitrary (not necessarily very large) with possibly  correlated primitive random variables; (c')  the coupling is  through the weighted mean of the states and actions; (d') the proof technique  revolves around  a gauge transformation initially proposed in~\cite{arabneydi2016new} (not  based on the negligible effect); (e') the solution concept is sequential Nash; (f') the solution is  exact (not  an approximate one) for any arbitrary number of players  and it is  identified by Riccati equations; (g') since the solution concept is sequential,  it is well suited  for long-horizon and  reinforcement learning, and (h') since the results are also valid for  finite-population game, the dynamics and cost are not necessarily  limited to uniformly bounded functions with respect to $n$. It  is shown in~\cite{Jalal2019Automatica} that the classical LQ mean-field game with the tracking cost formulation is a special case of deep structured games under standard conditions, where  the mean-field equilibrium coincides with the sequential mean-field equilibrium. It is to be noted that the LQ mean-field-type game~\cite{elliott2013discrete,bensoussan2013mean,carmona2018probabilistic} is a single-agent
control problem (i.e., it is not a non-cooperative game), which resembles a team problem with social welfare cost function.\footnote{When the mean field  is replaced by the  expectation of the state of the genetic player, the resultant problem is called mean-field-type game.} In particular, it may be viewed as a special case of  risk-neutral LQ mean-field teams introduced in~\cite{arabneydi2016new},  showcased in~\cite{Jalal2017linear,JalalCDC2015,JalalCDC2018,Jalal2019LCSS,JalalACC2018},  and  extended to deep structured LQ teams in~\cite{Jalal2019risk}. 
The interested reader is referred to~\cite[Section VI]{Jalal2019Automatica} for more details on similarities and differences between mean-field games, mean-field-type games and  mean-field teams.

The rest of the paper is organized as follows. In Section~\ref{sec:problem}, the problem  of LQ deep structured game  is formulated. In Section~\ref{sec:main},  the global convergence of model-based and model-free policy gradient descent and natural policy gradient descent algorithms are presented. In Section~\ref{sec:numerical}, some numerical examples are provided to validate the theoretical results. The paper is concluded in Section~\ref{sec:conclusion}.

\section{Problem Formulation}\label{sec:problem}
Throughout the paper, $\mathbb{R}$, $\mathbb{R}_{>0}$ and $\mathbb{N}$ refer to the sets of real, positive real and natural numbers, respectively.  Given any $ n \in \mathbb{N}$, $\mathbb{N}_n$, $x_{1:n}$ and $\mathbf{I}_{n \times n}$ denote the finite set $\{1,\ldots,n\}$, vector $(x_1,\ldots,x_n)$ and the $n\times n$ identity matrix, respectively.    $\| \boldsymbol \cdot \|$ is the  spectral norm of a matrix,  $\NF{\boldsymbol \cdot}$ is the Frobenius norm of a matrix, $\TR(\boldsymbol \cdot)$ is the trace of a matrix,  $\sigma_{\text{min}}(\boldsymbol \cdot)$ is the minimum singular value of a matrix, $\rho(\boldsymbol \cdot)$ is  the spectral radius of a matrix,   and  $\DIAG(\Lambda_1, \Lambda_2)$ is the block diagonal matrix $[\Lambda_1\quad 0;0 \quad \Lambda_2]$. For vectors $x,y$ and $z$, $\VEC(x,y,z)=[x^\intercal, y^\intercal,z^\intercal]^\intercal$ is a column vector.   The superscript $-i$ refers to all players except the $i$-th player. In addition, $\Poly(\boldsymbol \cdot)$ denotes polynomial function.

Consider a nonzero-sum stochastic dynamic game with $n \in \mathbb{N}$  players. Let $x^i_t \in \mathbb{R}^{d_x}$, $u^i_t \in \mathbb{R}^{d_u}$ and $w^i_t \in \mathbb{R}^{d_x}$ denote the state, action and local noise of player $i \in \mathbb{N}_n$ at time $t \in \mathbb{N}$, where $d_x,d_u \in \mathbb{N}$.     Define  the  weighted averages:
\begin{equation}
\bar x_t:=\sum_{i=1}^n  \alpha^i_n x^i_t, \quad \bar u_t:=\sum_{i=1}^n \alpha^i_n u^i_t,
\end{equation}
where $\alpha^i_n \in \mathbb{R}$  is the \emph{influence factor} (weight) of player $i$ among its peers. From~\cite{Jalal2019MFT,Jalal2019Automatica,Jalal2019risk}, we refer to the above linear regressions as \emph{deep state} and \emph{deep action} in the sequel.   To ease the presentation,   the weights are normalized as follows:  $\sum_{i=1}^n \alpha^i_n=1$.

The initial states $\{x^1_1,\ldots,x^n_1 \}$ are  random  with   finite covariance matrices.   The evolution of  the state of player $i \in \mathbb{N}_n$ at time $t \in \mathbb{N}$  is given by:
\begin{equation}\label{eq:dynamics_original}
x^i_{t+1}=Ax^i_t +Bu^i_t+ \bar A \bar x_t +\bar B \bar u_t +w^i_t,
\end{equation}
where $\{w^i_t\}_{t=1}^\infty$ is an i.i.d. zero-mean  noise process with  a finite  covariance matrix.   The primitive  random variables $\{ \{x^i_1\}_{i=1}^n,  \{w^i_1\}_{i=1}^n,\{w^i_2\}_{i=1}^n,\ldots \}$ are defined on a common probability space and are mutually independent across time. The above random variables can be non-Gaussian  and  correlated (not necessarily independent) across players.  The cost of player $i \in \mathbb{N}_n$ at time $t \in \mathbb{N}$ is given by:
\begin{equation}\label{eq:cost_original}
\begin{split}
c_t^i=&(x_t^i)^{\intercal}Q x_t^i+2(x_t^i)^{\intercal}S^x\bar{x}_t+(\bar{x}_t)^{\intercal}\bar Q\bar{x}_t\\
&+(u_t^i)^{\intercal}R u_t^i+2(u_t^i)^{\intercal}S^u\bar{u}_t+(\bar{u}_t)^{\intercal} \bar R\bar{u}_t,
\end{split}
\end{equation}
where $Q, S^x, \bar{Q}, R, S^u$ and $\bar{R}$ are symmetric matrices with appropriate dimensions. 

From~\cite{Jalal2019MFT,Jalal2019Automatica,Jalal2019risk}, an information structure called \emph{deep state sharing} (DSS) is considered wherein each player $i \in \mathbb{N}_n$ at any time $t \in \mathbb{N}$ observes its local state $x^i_t$ and  the deep  state~$\bar x_t$, i.e., $u^i_t=g^i_t(x^i_{1:t},\bar x_{1:t})$,
where $g^i_t$ is a measurable function adapted to the filteration  of the underlying  primitive random variables of $\{x^i_{1:t},\bar x_{1:t}\}$.
 When the number of players is very large, one can use \emph{no-sharing} (NS) information structure wherein each player observes  only its local state.  However, such a  fully decentralized information structure comes at a price that one must predict the trajectory of the deep state in time (which introduces the computational complexity in time horizon in terms of storage and computation). For example,  if  the dynamics of the deep state  (i.e., $A+\bar A$ and $B+\bar B$) is known (which is not applicable for model-free applications), the deep state  can be predicted a head of time  when primitive random variables are mutually independent  by the strong law of large numbers.  Alternatively,  one can assume to have access to an external simulator for the dynamics of the deep state (which is basically DSS structure).  In this paper,  we focus on DSS information structure wherein there is no loss of optimality in restricting attention to  stationary strategies despite the fact that the deep state is not stationary. 
   The  interested  reader is  referred   to~\cite{Jalal2019Automatica} for the convergence analysis of NS (approximate) solution to  the  DSS  solution, as $n \rightarrow \infty$.

 Define  $\mathbf g^i_n:=\{g^i_t\}_{t=1}^\infty$ and $\mathbf g_n:=\{ \mathbf g^1,\ldots,\mathbf{g}^n\}$.  The admissible set of actions are square integrable such that $\Exp{\sum_{t=1}^\infty  \gamma^{t-1}(u^i_t)^\intercal u^i_t} <\infty$. Given a discount factor $\gamma \in (0,1)$, the cost-to-go for any  player $i \in \mathbb{N}_n$ is described by:
\begin{equation}\label{cost_function}
J^i_{n,\gamma} (\mathbf g^i_n,\mathbf g^{-i}_n)_{t_0}=(1-\gamma) \Exp{\sum_{t=t_0}^\infty \gamma^{t-1} c^i_t}, \quad t_0 \in \mathbb{N}.
\end{equation}

\begin{problem}\label{prob1}
Suppose that the weights are homogeneous, i.e. $\alpha^i_n=\frac{1}{n}$, $i \in \mathbb{N}_n$. When a  sequential  Nash strategy $\mathbf{g}^\ast_n $  exists,  develop model-based and model-free  gradient descent and natural policy gradient descent procedures  under DSS information structure such that  for any player $i \in \mathbb{N}_n$ at any stage of the game $t_0\in \mathbb{N}$, and any arbitrary strategy $\mathbf g^i$:
\begin{equation}
{J^i_{n,\gamma}(\mathbf g^{\ast,i}_n,\mathbf g^{\ast,-i}_n)}_{t_0} \leq {J^i_{n,\gamma}(\mathbf g^{ i},\mathbf g^{\ast,-i}_n)}_{t_0}. 
\end{equation}
\end{problem}

\begin{remark}
\emph{It is to be noted that Problem~\ref{prob1} holds  for  arbitrary number of players $n$, where the solution depends on $n$. Since the infinite-population solution is easier for analysis and may be viewed as a special case, one can generalize  the  homogeneous weights $\alpha^i_n=\frac{1}{n}$ to   heterogeneous weights   $\alpha^i_n=\frac{\beta^i}{n}$,  where $\beta^i \in [-\beta_{\text{max}},\beta_{\text{max}}]$, $\beta_{\text{max}} \in \mathbb{R}_{>0}$, $i \in \mathbb{N}_n$.     The resultant solution is called \emph{sequential weighted mean-field equilibrium} (SWMFE) in  \cite{Jalal2019Automatica}. The SWMFE constructs an approximate solution at any stage of the game $t_0 \in \mathbb{N}$ such that $J^i_{n,\gamma}(\mathbf g^{\ast,i}_\infty,\mathbf g^{\ast,-i}_{\infty})_{t_0} \leq {J^i_{n,\gamma}(\mathbf g^{ i},\mathbf g^{\ast,-i}_\infty)}_{t_0}+\varepsilon(n)$, 
where $\lim_{n \rightarrow \infty}\varepsilon(n)=0$. For more details, see~\cite[Theorem 4]{Jalal2019Automatica}.}
\end{remark}

\subsection{Main challenges and contributions}
There are several challenges to solve Problem~\ref{prob1}. The first one is the \emph{curse of dimensionality}, where  the computational complexity of the solution increases with the number of players.   The second one is the  \emph{imperfect information} structure, where players do not have perfect information about  the states of other players. The third challenge is that the resultant  optimization problem  is \emph{non-convex} in the policy space, see a counterexample in~\cite{fazel2018global}. The forth one lies in the fact that policy optimization is \emph{not even locally convergent} in a game with continuous spaces, in general; see a counterexample in~\cite{mazumdar2019policy}. The main contribution of this paper is  to  present an analytical proof for the global convergence of  model-based  and model-free policy gradient  algorithms. In contrast to the model-based solution in~\cite{Jalal2019Automatica} (whose number of unknowns increases quadratically with $d_x$), the number of unknown parameters in the proposed  algorithms  increases linearly with  $d_x$ and $ d_u$.   To the best of our knowledge, this is the first result on the global convergence of policy  optimization   in  nonzero-sum LQ games.

\section{Main Results}\label{sec:main}
In this section, we first present a model-based algorithm introduced in~\cite{Jalal2019Automatica} that requires $2d_x \times 2d_x$ parameters to construct the solution. Then, we propose two model-based gradient algorithms and prove their global convergence to the above solution, where their planning space is the  policy space (that requires $2d_u \times d_x$ parameters to identify the solution).   Based on the proposed  gradient methods, we  develop two model-free (reinforcement learning) algorithms and establish their global convergence to the model-based solution.

From~\cite{Jalal2019Automatica}, we use a gauge transformation to define the  following  variables for any player $i \in \mathbb{N}_n$ at any time $t \in \mathbb{N}$: 
$ \mathbf{x}_t^i:= \VEC(x_t^i-\bar{x}_t,\bar x_t)$, $ \mathbf{u}_t^i:= \VEC(u_t^i-\bar{u}_t,\bar u_t)$ and $ \mathbf{w}_t^i:= \VEC(w_t^i-\bar{w}_t,\bar w_t)$, 
where $\bar{w}_t:=\frac{1}{n}\sum_{i=1}^{n}  w_t^i$. In addition, we define  the following  matrices: $\mathbf A:=\DIAG(A, A+\bar A)$, $\mathbf B:=\DIAG(B, B+\bar B)$, and 
\begin{equation}
\Compress
\mathbf {Q}:=\begin{bmatrix}
Q  &Q+S^x\\
Q+S^x&Q+2S^x+\bar Q\\
\end{bmatrix},\quad 
\mathbf {R}:=\begin{bmatrix}
R&R+S^u\\
R+S^u&R+2S^u+ \bar R\\
\end{bmatrix}.
\end{equation}
We now express the per-step cost  of each player in~\eqref{eq:cost_original}  as:
\begin{align}\label{perstep}
c_t^i&=(\mathbf x^i_{t})^{\intercal}\mathbf{Q}\mathbf x^i_{t}+(\mathbf u^i_{t})^{\intercal}\mathbf{R}\mathbf u^i_{t}.
\end{align}
To  formulate the solution, we  present a non-standard algebraic Riccati equation, introduced in~\cite{Jalal2019Automatica}, as follows:
\begin{equation}\label{eq:riccati-bar-m}
\mathbf M(\boldsymbol \theta)=\mathbf Q + \boldsymbol \theta^\intercal \mathbf R \boldsymbol \theta+ \gamma (\mathbf A -\mathbf B \boldsymbol \theta)^\intercal \mathbf M(\boldsymbol \theta) (\mathbf A - \mathbf B \boldsymbol \theta), 
\end{equation}
where $\boldsymbol \theta:=\DIAG(\theta(n),\bar \theta(n))$,  $\theta(n):=(F_n)^{-1}  K_n$, $\bar \theta(n):=(\bar F_n)^{-1} \bar K_n$, and   matrices $\DeltaF$, $\barF$, $\DeltaK$ and $\barK$ are  given by:
\begin{align}\label{eq:breve-f}
&\DeltaF= (1-\frac{1}{n})\Big[R + \gamma B^\intercal {{}\mathbf M}^{ 1,1}(\boldsymbol \theta) B  \Big] \nonumber  \\
&\quad + \frac{1}{n}\Big[R + S^u +\gamma  (B+\bar B)^\intercal {{}\mathbf M}^{1,2}(\boldsymbol \theta) B \Big], \nonumber \\
&\barF= (1-\frac{1}{n})\left[R+ S^u +\gamma  B^\intercal {{}\mathbf M}^{2,1}(\boldsymbol \theta) (B+\bar B)  \right]  \nonumber \\
 &+ \frac{1}{n}\left[R + 2S^u + \bar R +\gamma  (B+\bar B)^\intercal {{}\mathbf M}^{2,2}(\boldsymbol \theta) (B + \bar B)  \right], \nonumber \\
&\DeltaK= (1-\frac{1}{n})\gamma B^\intercal {{}\mathbf M}^{1,1}(\boldsymbol \theta) A + \frac{\gamma}{n} (B+ \bar B)^\intercal {{}\mathbf M}^{1,2}(\boldsymbol \theta) A, \nonumber \\
&\bar  K_n\hspace{-.1cm}= \hspace{-.1cm}(1\hspace{-.1cm}-\hspace{-.1cm}\frac{1}{n})\gamma  B^\intercal {{}\mathbf M}^{ 2,1}(\boldsymbol \theta) (A\hspace{-.1cm}+\hspace{-.1cm}\bar A)\hspace{-.1cm} + \hspace{-.1cm} \frac{\gamma}{n}  (B\hspace{-.1cm}+ \hspace{-.1cm}\bar B)^\intercal {{}\mathbf M}^{2,2}(\boldsymbol \theta) (A\hspace{-.1cm}+\hspace{-.1cm}\bar A).
\end{align}
\begin{assumption}\label{ass:existence}
Suppose equations~\eqref{eq:riccati-bar-m} and~\eqref{eq:breve-f} admit a  unique stable solution, which is also the limit of the finite-horizon solution. In addition,  let $\DeltaF$ and $\barF$ be  invertible matrices, and  $(1-\frac{1}{n})\DeltaF + \frac{1}{n}\barF$  be  a positive definite matrix.
\end{assumption}
We now provide two sufficient conditions  for Assumption~\ref{ass:existence} ensuring the existence of a stationary solution. Let  $G$ denote the mapping  from  $\mathbf M$ to $\boldsymbol \theta$ displayed in~\eqref{eq:breve-f} (where $\boldsymbol \theta=G(\mathbf M)$),  and $L$ denote the mapping from $\boldsymbol \theta$ to $\mathbf M$ expressed in~\eqref{eq:riccati-bar-m} (where $\mathbf M=L(\boldsymbol \theta)$). Thus, $\mathbf M=L(G(\mathbf M))$ is a fixed-point equation to  be solved by  fixed-point methods.
\begin{assumption}\label{ass:contractive}
Let the mapping $L(G(\boldsymbol \cdot))$ be a contraction,   implying that equations~\eqref{eq:riccati-bar-m} and~\eqref{eq:breve-f} admit a unique  fixed-point  solution. In addition,  let $\DeltaF$ and $\barF$ be  invertible matrices, and  $(1-\frac{1}{n})\DeltaF + \frac{1}{n}\barF$  be  a positive definite matrix.
\end{assumption}
\begin{assumption}[Infinite-population decoupled  Riccati equations]\label{ass:decoupled}
\emph{Let $Q$ and $Q+S^x$ be positive semi-definite,  $R$ and $R+S^u$ be positive definite,  and $\bar A$ and  $\bar B$ be zero. Suppose  $(A,B)$ is stabilizable, and $(A,Q^{1/2})$ and $(A,(Q+S^x)^{1/2})$ are detectable.   When  $n$ is asymptotically large,  the non-standard Riccati equation~\eqref{eq:riccati-bar-m} decomposes into two decoupled standard Riccati equations; see~\cite[Proposition 2]{Jalal2019Automatica}.}
\end{assumption}
\begin{theorem}[Model-based solution using non-standard Riccati equation~\cite{Jalal2019Automatica}]\label{thm:model_known}
Let Assumption~\ref{ass:existence} hold.   There exists a stationary subgame  perfect Nash equilibrium such that  for any player $ i \in \mathbb{N}_n$ at any time  $t\in \mathbb{N}_T$,
 \begin{equation}
 u^{\ast,i}_t=-\theta^\ast(n) x^i_t -  ( \bar \theta^\ast(n) - \theta^\ast(n))\bar x_t,
  \end{equation}
  where the  gains are obtained from \eqref{eq:breve-f}. 
 In addition,  the optimal cost of player~$i \in \mathbb{N}_n$  from the initial time $t_0=1$ is given  by:
$ J^{i}_{n,\gamma}(\boldsymbol \theta^\ast)=    (1-\gamma) \TR( \mathbf M(\boldsymbol \theta^\ast) \boldsymbol \Sigma^i_x) +\gamma  \TR( \mathbf M(\boldsymbol \theta^\ast)  \boldsymbol \Sigma^i_w)$,
where $ \boldsymbol \Sigma^i_x:=\Exp{(\VEC(\Delta x^i_1), \bar x_1)(\VEC(\Delta x^i_1), \bar x_1)^\intercal}$ and $ \boldsymbol \Sigma^i_w:=\Exp{\VEC(\Delta w^i_t), \bar w_t)\VEC(\Delta w^i_t), \bar w_t)^\intercal }$.
\end{theorem}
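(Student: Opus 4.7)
The plan is to exploit the gauge transformation $\mathbf{x}^i_t = \VEC(x^i_t - \bar x_t, \bar x_t)$, $\mathbf{u}^i_t = \VEC(u^i_t - \bar u_t, \bar u_t)$, $\mathbf{w}^i_t = \VEC(w^i_t - \bar w_t, \bar w_t)$ to recast the coupled game into a canonical LQ form with block-diagonal dynamics. First I would verify that $\mathbf{x}^i_{t+1} = \mathbf{A}\mathbf{x}^i_t + \mathbf{B}\mathbf{u}^i_t + \mathbf{w}^i_t$ holds: averaging \eqref{eq:dynamics_original} over $i$ with $\alpha^i_n = 1/n$ yields $\bar x_{t+1} = (A+\bar A)\bar x_t + (B+\bar B)\bar u_t + \bar w_t$, and subtracting this from \eqref{eq:dynamics_original} delivers the deviation recursion $x^i_{t+1} - \bar x_{t+1} = A(x^i_t - \bar x_t) + B(u^i_t - \bar u_t) + (w^i_t - \bar w_t)$, which together constitute $\mathbf{A} = \DIAG(A, A+\bar A)$ and $\mathbf{B} = \DIAG(B, B+\bar B)$. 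The per-step cost identity \eqref{perstep} then follows by expanding each quadratic of \eqref{eq:cost_original} around $\bar x_t$ and $\bar u_t$ and collecting the resulting cross terms into the block matrices $\mathbf{Q}$ and $\mathbf{R}$.

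Next, I would postulate a stationary linear feedback $\mathbf{u}^{*,i}_t = -\boldsymbol\theta \mathbf{x}^i_t$ with $\boldsymbol\theta = \DIAG(\theta(n), \bar\theta(n))$, which is algebraically equivalent to the policy stated in the theorem. Fixing every opponent $j \neq i$ to this feedback, player $i$'s best response is obtained by dynamic programming with a quadratic ansatz $V^i(\mathbf{x}) = \mathbf{x}^\intercal \mathbf{M}(\boldsymbol\theta)\mathbf{x} + \text{const}$; matching quadratic terms in the Bellman equation reproduces \eqref{eq:riccati-bar-m}. The first-order condition in the Bellman minimization then yields the gain formula \eqref{eq:breve-f}, where the mixing weights $(1-1/n)$ and $1/n$ arise because player $i$'s own action enters its deviation block with weight $1-1/n$ and the common deep block with weight $1/n = \alpha^i_n$. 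Imposing the Nash self-consistency condition (player $i$'s best-response gain must equal the gain $\boldsymbol\theta$ assumed for all other players) yields exactly the fixed-point system $\boldsymbol\theta = G(\mathbf{M})$, $\mathbf{M} = L(\boldsymbol\theta)$ defined before Assumption~\ref{ass:contractive}.

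The value-function formula follows from a Lyapunov-style evaluation along the closed-loop trajectory: using \eqref{eq:riccati-bar-m} to rewrite the one-step cost $\mathbf{x}^\intercal(\mathbf{Q} + \boldsymbol\theta^\intercal \mathbf{R} \boldsymbol\theta)\mathbf{x}$ as $\mathbf{x}^\intercal \mathbf{M}\mathbf{x} - \gamma (\mathbf{A}-\mathbf{B}\boldsymbol\theta)^\intercal \mathbf{M} (\mathbf{A}-\mathbf{B}\boldsymbol\theta)\mathbf{x}$, the discounted sum in \eqref{cost_function} telescopes, and taking expectations (with $\mathbf w^i_t$ zero-mean and i.i.d.) collapses to $(1-\gamma)\TR(\mathbf{M}(\boldsymbol\theta^*)\boldsymbol\Sigma^i_x) + \gamma \TR(\mathbf{M}(\boldsymbol\theta^*)\boldsymbol\Sigma^i_w)$, with the $(1-\gamma)$ prefactor tracking the normalization in \eqref{cost_function}.

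The hardest part is verifying that the postulated block-diagonal stationary feedback is a \emph{genuine} subgame perfect Nash equilibrium rather than a spurious stationary point, and this hinges on carefully accounting for how a unilateral deviation of player $i$ propagates into $\bar x_t$ via the $1/n$ weight; the precise $(1 - 1/n)$ versus $1/n$ bookkeeping in \eqref{eq:breve-f} encodes exactly this coupling. Because $\mathbf{Q}$ and $\mathbf{R}$ are not block-diagonal, the Riccati operator $L(G(\cdot))$ mixes the deviation and deep components, so existence, uniqueness, and stability of the fixed point are not automatic; these are precisely what Assumption~\ref{ass:existence} postulates, together with the identification of the infinite-horizon solution as the limit of the finite-horizon backward recursion (from which both stationarity and subgame perfection are inherited).
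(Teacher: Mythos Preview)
The paper does not prove Theorem~\ref{thm:model_known}; it is imported from~\cite{Jalal2019Automatica} and stated without proof. Your outline is consistent with the gauge-transformation machinery the paper sets up immediately before the theorem (the block-diagonal dynamics $\mathbf A,\mathbf B$ and the per-step cost identity~\eqref{perstep}), and it follows what is evidently the approach of the cited reference: reduce to a canonical LQ form via the change of variables, apply dynamic programming with a quadratic value ansatz, read off the Nash self-consistency condition as the fixed-point system~\eqref{eq:riccati-bar-m}--\eqref{eq:breve-f}, and evaluate the closed-loop cost by telescoping. Since there is no in-paper proof to compare against, the most that can be said is that your sketch correctly identifies the main ingredients, handles the $(1-\tfrac{1}{n})$ versus $\tfrac{1}{n}$ bookkeeping at the right place, and invokes Assumption~\ref{ass:existence} exactly where the paper does---for existence, uniqueness, and stability of the fixed point, which the block structure alone does not guarantee because $\mathbf Q$ and $\mathbf R$ are not block-diagonal.
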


\subsection{Model-based solution using policy optimization}
  From Theorem~\ref{thm:model_known}, there is no loss of optimality in restricting attention to  linear identical stationary strategies of the form $\boldsymbol \theta=\DIAG(\theta,\bar \theta)$.   Therefore,  we   select one arbitrary player~$i$  as a learner  and  other players as imitators  (that are passive during the learning process). 
   More precisely, at each time instant, player $i$ uses a gradient algorithm to update its strategy  whereas  other players employ  the updated strategy to  determine their next actions.  In this article, we discard the process of selecting the learner, but in order to have a fair implementation, the learner may be chosen randomly at each iteration.\footnote{For the special case of infinite population, it is also possible that all players become learners, i.e., they simultaneously learn the strategies as long as their  exploration noises are i.i.d. In such a case,  the infinite-population deep state  reduces to  weighted  mean-field and remains unchanged.
} For simplicity of presentation, we omit the superscript $i$ and the subscription  of the cost function.  Hence,  the strategy of the  learner  can be described by:
$\mathbf u_t=-\boldsymbol \theta \mathbf x_t,  \mathbf u_t \in \mathbb{R}^{2d_u}, \mathbf x_t \in \mathbb{R}^{2 d_x},  t \in \mathbb{N}$.

\begin{lemma}\label{lemma:gradient_formulation} The  following holds at the initial time $t_0=1$:
\begin{equation}\label{eq:gradient_1}
[
\nabla_{ \theta} J(\boldsymbol \theta),
\nabla_{\bar \theta} J(\boldsymbol  \theta)]
= 2   \mathbf P_n \mathbf E_{\boldsymbol \theta}  \boldsymbol \Sigma_{\boldsymbol \theta},
\end{equation}
where 
\begin{equation}\label{eq:gradient_matrices}
\begin{cases}
\mathbf P_n:= [
(1-\frac{1}{n}) \mathbf I_{d_u \times d_u},
 \frac{1}{n} \mathbf I_{d_u \times d_u}],\\
 \mathbf E_{\boldsymbol \theta}:=    (\mathbf R + \gamma \mathbf B^\intercal \mathbf M(\boldsymbol \theta) \mathbf B)\boldsymbol \theta
 - \gamma \mathbf B^\intercal \mathbf M(\boldsymbol \theta) \mathbf A,\\
 \boldsymbol \Sigma_{\boldsymbol \theta}:= \Exp{(1-\gamma)\sum_{t=1}^\infty \gamma^{t-1}\mathbf x_t \mathbf x_t^\intercal}.
 \end{cases}
\end{equation}
\end{lemma}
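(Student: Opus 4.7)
The plan is to combine a standard discounted-LQ policy-gradient identity on the gauge-transformed system with a chain rule that converts the learner's genuine decision variables $(\theta,\bar\theta)$ into the effective gain acting on $\mathbf x^i_t$. The first step is to verify that in the coordinates $(\mathbf x^i_t,\mathbf u^i_t,\mathbf w^i_t)$ the closed-loop dynamics reduce to $\mathbf x^i_{t+1}=\mathbf A\mathbf x^i_t+\mathbf B\mathbf u^i_t+\mathbf w^i_t$: averaging~\eqref{eq:dynamics_original} yields $\bar x_{t+1}=(A+\bar A)\bar x_t+(B+\bar B)\bar u_t+\bar w_t$, and subtracting gives $\Delta x^i_{t+1}=A\Delta x^i_t+B(u^i_t-\bar u_t)+\Delta w^i_t$. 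Hence from the learner's perspective the transformed system is a single-agent discounted LQ problem with quadratic per-step cost~\eqref{perstep}; the imitators enter only through their contribution to $\bar u_t$.

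Next I would treat $\mathbf u^i_t=-\boldsymbol\vartheta\,\mathbf x^i_t$ for an \emph{unconstrained} gain $\boldsymbol\vartheta\in\mathbb R^{2d_u\times 2d_x}$ and establish the value-function identity $J(\boldsymbol\vartheta)=(1-\gamma)\mathbb E[(\mathbf x_1^i)^\intercal\mathbf M(\boldsymbol\vartheta)\mathbf x_1^i]+\gamma\,\TR(\mathbf M(\boldsymbol\vartheta)\boldsymbol\Sigma_w)$ by iterating the Bellman recursion generated by~\eqref{eq:riccati-bar-m}, interpreted here as a Lyapunov equation rather than an optimality condition. Implicit differentiation of~\eqref{eq:riccati-bar-m} with respect to $\boldsymbol\vartheta$, followed by a telescoping summation mirroring the argument of~\cite{fazel2018global}, yields
\begin{equation}
\nabla_{\boldsymbol\vartheta} J(\boldsymbol\vartheta)=2\,\mathbf E_{\boldsymbol\vartheta}\boldsymbol\Sigma_{\boldsymbol\vartheta},
\end{equation}
with $\mathbf E_{\boldsymbol\vartheta}$ and $\boldsymbol\Sigma_{\boldsymbol\vartheta}$ exactly as in~\eqref{eq:gradient_matrices}.

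The third step is to link $\boldsymbol\vartheta$ to the learner's primal parameters $(\theta,\bar\theta)$. When the learner perturbs its policy by $(\delta\theta,\delta\bar\theta)$ while the $n-1$ imitators retain the symmetric strategy, a direct calculation exploiting $\alpha^i_n=1/n$ and $\sum_{j\neq i}\Delta x^j_t=-\Delta x^i_t$ produces $\delta\bar u_t=-\tfrac{1}{n}[\delta\theta,\delta\bar\theta]\mathbf x^i_t$ and $\delta(u^i_t-\bar u_t)=-(1-\tfrac{1}{n})[\delta\theta,\delta\bar\theta]\mathbf x^i_t$. Stacking these two components identifies the induced first-order perturbation of $\boldsymbol\vartheta$ at the symmetric baseline as $\mathbf P_n^\intercal[\delta\theta,\delta\bar\theta]$. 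Substituting into $\delta J=\TR((\nabla_{\boldsymbol\vartheta}J)^\intercal\,\delta\boldsymbol\vartheta)$ and invoking cyclic invariance of the trace then yields the claimed $[\nabla_\theta J,\nabla_{\bar\theta}J]=2\,\mathbf P_n\,\mathbf E_{\boldsymbol\theta}\boldsymbol\Sigma_{\boldsymbol\theta}$.

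The main obstacle will be the chain-rule bookkeeping of the third step: although the map from $(\theta,\bar\theta)$ to the effective gain $\boldsymbol\vartheta$ is genuinely nonlinear away from the symmetric point (because $\bar u_t$ folds in every player's strategy), its linearization at the symmetric baseline must collapse exactly to $\mathbf P_n^\intercal$—a crisp quantitative consequence of the homogeneous weights $\alpha^i_n=1/n$ that would fail without them.
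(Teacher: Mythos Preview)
Your proposal is correct and reaches the same identity as the paper, but the organization differs. The paper writes $\mathbf u^i_t$ explicitly as a function of $(\theta^i,\bar\theta^i)$ and of the imitators' parameters, differentiates the one-step Bellman relation $J_{\mathbf x^i_1}(\boldsymbol\theta)=\mathbb E[(\mathbf x^i_1)^\intercal\mathbf Q\mathbf x^i_1+(\mathbf u^i_1)^\intercal\mathbf R\mathbf u^i_1]+\gamma J_{\mathbf x^i_2}(\boldsymbol\theta)$ directly with respect to $(\theta^i,\bar\theta^i)$, sets $\theta^i=\theta^j=\theta$ and $\bar\theta^i=\bar\theta^j=\bar\theta$, and then unrolls the resulting recursion in~$t$. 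Your route instead invokes the single-agent discounted-LQR gradient $\nabla_{\boldsymbol\vartheta}J=2\mathbf E_{\boldsymbol\vartheta}\boldsymbol\Sigma_{\boldsymbol\vartheta}$ as a black box and pushes the game structure through the chain rule on the parametrization $(\theta,\bar\theta)\mapsto\boldsymbol\vartheta$. What you gain is modularity: the multi-player structure enters only through $\delta\boldsymbol\vartheta=\mathbf P_n^\intercal[\delta\theta,\delta\bar\theta]$, so the $\mathbf P_n$ factor appears transparently from a single Frobenius inner-product identity. What the paper's direct computation buys is that the block-by-block correspondences between \eqref{eq:breve-f} and \eqref{eq:gradient_matrices} are made explicit. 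One minor correction: with the imitators fixed at a common $(\theta,\bar\theta)$, the identity $\sum_{j\ne i}\Delta x^j_t=-\Delta x^i_t$ lets their contribution be absorbed linearly into $\mathbf x^i_t$, so the map from the learner's parameters to $\boldsymbol\vartheta$ is in fact affine, not nonlinear; your ``main obstacle'' is therefore not an obstacle at all, and the linearization you compute is exact.
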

\begin{proof}
The proof is presented in Appendix~\ref{sec:proof_lemma:gradient_formulation}.
\end{proof}
In this paper, we consider two gradient-based methods.
\begin{itemize}
\item \textbf{Policy gradient descent}: 
\begin{equation}\label{eq:GD}
\boldsymbol \theta_{k+1}=\boldsymbol \theta_k - \eta \DIAG(
\nabla_{ \theta} J( \boldsymbol \theta_k),\nabla_{\bar \theta} J(\boldsymbol  \theta_k)).
\end{equation}
\item \textbf{Natural policy gradient descent}:
\begin{equation}\label{eq:NPGD}
\Compress
\boldsymbol \theta_{k+1}=\boldsymbol \theta_k - \eta \DIAG(
\nabla_{ \theta} J( \boldsymbol \theta_k),\nabla_{\bar \theta} J(\boldsymbol  \theta_k)) \boldsymbol \Sigma_{\boldsymbol \theta}^{-1}.
\end{equation}
\end{itemize}

To prove our convergence results, we impose extra  standard assumptions, described below.
\begin{assumption}\label{stable_search}
The initial policy is stable. A policy $\boldsymbol \theta$ is said to be stable  if $\rho(\mathbf A-  \mathbf B \boldsymbol \theta) <1 $. 
\end{assumption}

\begin{assumption}\label{ass:positive_sigma}
Given the learner, $\Exp{\mathbf x_1 (\mathbf x_1)^\intercal}$ is positive definite.  For the special case of i.i.d. initial states,   $\Exp{\mathbf x^i_1 (\mathbf x^i_1)^\intercal}$= $\DIAG((1-\frac{1}{n}) \text{cov}(x_1), \frac{1}{n}\text{cov}(x_1)+\Exp{x_1} \Exp{x_1}^\intercal)$ is  positive definite  if   $\text{cov}(x^i_1)=:\text{cov}(x_1)$ and  $\Exp{x^i_1}\Exp{x^i_1}^\intercal=: \Exp{x_1}\Exp{x_1}^\intercal$, $ i \in \mathbb{N}_n$, are  positive definite. 
\end{assumption}
\begin{assumption}\label{ass:positive_Q}
For finite-population model,  $\mathbf Q$ and $\mathbf R$ are positive definite matrices.  For the infinite-population case  satisfying Assumption~\ref{ass:decoupled},  $Q$ and  $Q+S$  are positive definite.
\end{assumption}

Assumptions~\ref{stable_search}--\ref{ass:positive_Q} are  standard conditions in the literature of LQ reinforcement learning~\cite{fazel2018global,malik2020derivative}, which ensure  that for any stable $\boldsymbol \theta$, $J(\boldsymbol \theta)$ is properly bounded and $\boldsymbol \Sigma_{\boldsymbol \theta} \succcurlyeq \Exp{\mathbf x_1 (\mathbf x_1)^\intercal}$ is positive definite. We now show that the best-response optimization at the learner  satisfies the Polyak-Lojasiewicz (PL) condition~\cite{Polyak1964, Lojasiewicz1963}, which is a relaxation of the notion of strong convexity. Let $\mu:=\sigma_{min}(\Exp{\mathbf x_1 \mathbf x_1^\intercal})$.
\begin{lemma} [PL condition]	\label{lemma:GD}
Let Assumptions~\ref{ass:existence},~\ref{stable_search},~\ref{ass:positive_sigma} and~\ref{ass:positive_Q} hold. Let also $\boldsymbol \theta^*$ be the  Nash policy in Theorem~\ref{thm:model_known}.   There exists  a positive constant $L_1 (\boldsymbol \theta^\ast)$ such that
	\begin{equation}
	 J(\boldsymbol \theta)- J(\boldsymbol \theta^*) \leq  L_1 (\boldsymbol \theta^\ast)
	 \NF{[\nabla_{\theta} J(\boldsymbol \theta), \nabla_{ \bar \theta} J( \boldsymbol \theta)]}^2,
	\end{equation} 
	where $L_1(\boldsymbol \theta^\ast)= \frac{n^2\norm{\boldsymbol \Sigma_{\boldsymbol \theta^*}}}{4\mu^2\sigma_{\text{min}}(\mathbf{R})}$. For the special case of infinite population  (i.e. $n=\infty$) with i.i.d. initial states under Assumptions~\ref{ass:decoupled},~\ref{stable_search},~\ref{ass:positive_sigma} and~\ref{ass:positive_Q}, one has $L_1(\boldsymbol \theta^\ast)= \frac{\norm{\boldsymbol \Sigma^{1,1}_{\boldsymbol \theta^*}}}{4\sigma_{min}(\text{cov}(x_1))^2\sigma_{\text{min}}(R)} +  \frac{\norm{\boldsymbol \Sigma^{2,2}_{\boldsymbol \theta^*}}}{4\sigma_{min}(\mathbb{E}[x_1]\mathbb{E}[x_1]^\intercal)^2\sigma_{\text{min}}(R+S^u)}$.
\end{lemma}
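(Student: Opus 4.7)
The plan is to follow the gradient-domination (PL) template developed in~\cite{fazel2018global} for single-agent LQR, adapted to the block-diagonal parameterization $\boldsymbol\theta=\DIAG(\theta,\bar\theta)$ dictated by the gauge transformation. By Assumption~\ref{stable_search} the closed-loop matrix $\mathbf A-\mathbf B\boldsymbol\theta$ is Schur, so $\mathbf M(\boldsymbol\theta)$ solves the Lyapunov equation obtained from~\eqref{eq:riccati-bar-m} and the value is quadratic with kernel $\mathbf M(\boldsymbol\theta)$. Using $V_{\boldsymbol\theta^*}(\mathbf x) = \mathbf x^\intercal \mathbf M(\boldsymbol\theta^*) \mathbf x$ as a Lyapunov candidate for the system closed by $\boldsymbol\theta$ and telescoping over time yields the standard LQR advantage identity
\begin{equation}
J(\boldsymbol\theta) - J(\boldsymbol\theta^*) = \TR\!\Big(\boldsymbol \Sigma_{\boldsymbol\theta^*}\bigl[-2(\boldsymbol\theta-\boldsymbol\theta^*)^\intercal \mathbf E_{\boldsymbol\theta} + (\boldsymbol\theta-\boldsymbol\theta^*)^\intercal \mathbf H_{\boldsymbol\theta}(\boldsymbol\theta-\boldsymbol\theta^*)\bigr]\Big),
\end{equation}
where $\mathbf H_{\boldsymbol\theta}:= \mathbf R +\gamma \mathbf B^\intercal \mathbf M(\boldsymbol\theta)\mathbf B$.

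Next I would exploit the fact that $\boldsymbol\theta-\boldsymbol\theta^*=\DIAG(\theta-\theta^*,\bar\theta-\bar\theta^*)$ is block-diagonal, so the cross term sees $\mathbf E_{\boldsymbol\theta}$ only through its ``averaged'' rows $\mathbf P_n \mathbf E_{\boldsymbol\theta}$, which by the identifications in~\eqref{eq:breve-f} coincide with $[F_n\theta-K_n,\ \bar F_n\bar\theta-\bar K_n]$; at the Nash point, Theorem~\ref{thm:model_known} makes these rows vanish. Completing the square separately in the two independent unknowns $\theta$ and $\bar\theta$, using the positive-definiteness of $F_n,\bar F_n$ from Assumption~\ref{ass:existence}, produces an upper bound of the form
\begin{equation}
J(\boldsymbol\theta)-J(\boldsymbol\theta^*) \leq  \|\boldsymbol\Sigma_{\boldsymbol\theta^*}\|\,\TR\bigl((\mathbf P_n \mathbf E_{\boldsymbol\theta})^\intercal \mathbf H_n^{-1}(\mathbf P_n \mathbf E_{\boldsymbol\theta})\bigr),
\end{equation}
with $\mathbf H_n := \DIAG(F_n,\bar F_n)$.

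Finally, I would invoke Lemma~\ref{lemma:gradient_formulation} to relate this to the gradient: $\|[\nabla_\theta J,\nabla_{\bar\theta}J]\|_F^2 = 4\|\mathbf P_n \mathbf E_{\boldsymbol\theta}\boldsymbol\Sigma_{\boldsymbol\theta}\|_F^2$. Bounding $\boldsymbol\Sigma_{\boldsymbol\theta}\succcurlyeq \mu\mathbf I$ by Assumption~\ref{ass:positive_sigma} and reading off a lower bound of order $\sigma_{\min}(\mathbf R)/n$ for $F_n,\bar F_n$ from~\eqref{eq:breve-f} under Assumption~\ref{ass:positive_Q}, one gets
\begin{equation}
\|[\nabla_\theta J,\nabla_{\bar\theta}J]\|_F^2 \geq \tfrac{4\mu^2\sigma_{\min}(\mathbf R)}{n}\TR\bigl((\mathbf P_n\mathbf E_{\boldsymbol\theta})^\intercal \mathbf H_n^{-1}(\mathbf P_n\mathbf E_{\boldsymbol\theta})\bigr),
\end{equation}
and dividing the two displays produces $L_1(\boldsymbol\theta^*) = \tfrac{n^2\|\boldsymbol\Sigma_{\boldsymbol\theta^*}\|}{4\mu^2\sigma_{\min}(\mathbf R)}$; the $n^2$ factor has a transparent origin, with one $n$ coming from the $\tfrac{1}{n}$-scaled block of $\mathbf P_n$ in the gradient and another from the $\tfrac{1}{n}\sigma_{\min}(\mathbf R)$ lower bound on $\mathbf H_n$. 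Under the decoupled infinite-population regime of Assumption~\ref{ass:decoupled}, the Riccati equation splits into two independent pieces, $\boldsymbol\Sigma_{\boldsymbol\theta^*}$ becomes block-diagonal, and the three steps apply separately to the $\Delta x$- and $\bar x$-subsystems, yielding the additive constant stated in the lemma.

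The main obstacle is verifying rigorously that the block-diagonal constraint on admissible perturbations collapses the naive LQR bound (which depends on all of $\mathbf E_{\boldsymbol\theta}$ and, importantly, does \emph{not} vanish at Nash since $\mathbf E_{\boldsymbol\theta^*}\ne 0$ in general) to its constrained counterpart depending only on $\mathbf P_n\mathbf E_{\boldsymbol\theta}$, which \emph{does} vanish exactly when $\mathbf P_n \mathbf E_{\boldsymbol\theta^*}=0$. Carrying out this restricted completion of the square, and identifying the correct ``Hessian'' $\mathbf H_n = \DIAG(F_n,\bar F_n)$ against which to complete it, is the delicate part of the argument and is what distinguishes the deep-structured game analysis from a direct transplant of~\cite{fazel2018global}.
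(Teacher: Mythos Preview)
Your outline contains a genuine gap at the second step. After writing the cost-difference identity in the transformed LQR variables, you claim that because $\boldsymbol\theta-\boldsymbol\theta^*=\DIAG(\delta,\bar\delta)$ is block-diagonal, the cross term $\TR\big(\boldsymbol\Sigma_{\boldsymbol\theta^*}(\boldsymbol\theta-\boldsymbol\theta^*)^\intercal\mathbf E_{\boldsymbol\theta}\big)$ sees $\mathbf E_{\boldsymbol\theta}$ only through $\mathbf P_n\mathbf E_{\boldsymbol\theta}$. This is not what the block-diagonal constraint does. With $\boldsymbol\Sigma_{\boldsymbol\theta^*}$ block-diagonal (as it is here since $\mathbf A,\mathbf B,\boldsymbol\theta^*$ are), a direct computation gives
\[
\TR\big(\boldsymbol\Sigma_{\boldsymbol\theta^*}(\boldsymbol\theta-\boldsymbol\theta^*)^\intercal\mathbf E_{\boldsymbol\theta}\big)=\TR\big(\boldsymbol\Sigma^{11}_{\boldsymbol\theta^*}\delta^\intercal\mathbf E^{11}_{\boldsymbol\theta}\big)+\TR\big(\boldsymbol\Sigma^{22}_{\boldsymbol\theta^*}\bar\delta^\intercal\mathbf E^{22}_{\boldsymbol\theta}\big),
\]
i.e.\ the \emph{diagonal} blocks of $\mathbf E_{\boldsymbol\theta}$ appear, not the $\mathbf P_n$-averaged rows. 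Likewise, the restricted completion of the square produces the diagonal blocks $\mathbf H^{11}_{\boldsymbol\theta}$, $\mathbf H^{22}_{\boldsymbol\theta}$ of $\mathbf H_{\boldsymbol\theta}=\mathbf R+\gamma\mathbf B^\intercal\mathbf M(\boldsymbol\theta)\mathbf B$, not $\DIAG(F_n,\bar F_n)$; the matrices $F_n,\bar F_n$ in~\eqref{eq:breve-f} are $\mathbf P_n$-weighted combinations of rows of $\mathbf H_{\boldsymbol\theta}$ and do not arise from the constrained quadratic. Consequently the displayed upper bound involving $(\mathbf P_n\mathbf E_{\boldsymbol\theta})^\intercal\mathbf H_n^{-1}(\mathbf P_n\mathbf E_{\boldsymbol\theta})$ is not what the argument yields, and the subsequent step relating it to $\|[\nabla_\theta J,\nabla_{\bar\theta}J]\|_F^2$ breaks down (your own accounting of the $n$-factors already gives $n$ rather than $n^2$).

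The paper avoids this obstruction by taking a different route. It keeps the \emph{unrestricted} completion of the square, obtaining the standard LQR upper bound $J(\boldsymbol\theta)-J(\boldsymbol\theta^*)\le\TR\big(\boldsymbol\Sigma_{\boldsymbol\theta^*}\mathbf E_{\boldsymbol\theta}^\intercal\mathbf H_{\boldsymbol\theta}^{-1}\mathbf E_{\boldsymbol\theta}\big)$ with the \emph{full} $\mathbf E_{\boldsymbol\theta}$ and $\mathbf H_{\boldsymbol\theta}$. The link to the gradient is then made not through $\mathbf P_n$ but through the \emph{invertible square} matrix $\tilde{\mathbf P}_n:=\DIAG\big((1-\tfrac1n)\mathbf I_{d_u},\tfrac1n\mathbf I_{d_u}\big)$: writing the auxiliary object $\nabla_{\boldsymbol\theta}\tilde J=2\tilde{\mathbf P}_n\mathbf E_{\boldsymbol\theta}\boldsymbol\Sigma_{\boldsymbol\theta}$ one recovers $\mathbf E_{\boldsymbol\theta}=\tfrac12\tilde{\mathbf P}_n^{-1}\nabla_{\boldsymbol\theta}\tilde J\,\boldsymbol\Sigma_{\boldsymbol\theta}^{-1}$ exactly, and the factor $n^2$ emerges transparently from $\sigma_{\min}(\tilde{\mathbf P}_n)^{-2}$. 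The infinite-population case is then handled separately, since $\tilde{\mathbf P}_n$ ceases to be invertible and the decoupling of Assumption~\ref{ass:decoupled} lets one apply the single-agent PL bound blockwise. Your final paragraph correctly anticipates that relating a quantity vanishing only on $\mathbf P_n\mathbf E_{\boldsymbol\theta^*}=0$ to one involving all of $\mathbf E_{\boldsymbol\theta}$ is the delicate point; the resolution, however, is the $\tilde{\mathbf P}_n$-inversion trick, not a restricted completion of the square.
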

\begin{proof}
	The proof is presented in Appendix~\ref{sec:proof_lemma:GD}.
\end{proof} 

In  the following lemmas, we  show that the cost function and  its gradient are locally Lipschitz functions.
\begin{lemma}[Locally Lipschitz cost function]\label{CLip}
	For any $\boldsymbol \theta '$ satisfying the inequality $\NF{\boldsymbol \theta' - \boldsymbol \theta} < \varepsilon(\boldsymbol \theta)$,  there exists a positive constant $L_2(\boldsymbol \theta)$ such that  
$	|J(\boldsymbol \theta ')- J(\boldsymbol \theta )|\leq L_2(\boldsymbol \theta )\NF{\boldsymbol \theta'-\boldsymbol \theta}$,
	where the explicit expressions of  $\varepsilon(\boldsymbol \theta)$ and $L_2(\boldsymbol \theta)$ can be obtained in a similar manner as~\cite[Lemma  15]{malik2020derivative}.
\end{lemma}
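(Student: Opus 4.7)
The plan is to reduce the claim to local Lipschitz continuity of the value matrix $\mathbf M(\boldsymbol \theta)$, and then transfer that bound to $J$ through the closed-form cost expression from Theorem~\ref{thm:model_known}. For any stabilizing $\boldsymbol \theta$, the learner's cost admits the representation
\[
J(\boldsymbol \theta)=(1-\gamma)\TR\!\bigl(\mathbf M(\boldsymbol \theta)\boldsymbol \Sigma_x\bigr)+\gamma\,\TR\!\bigl(\mathbf M(\boldsymbol \theta)\boldsymbol \Sigma_w\bigr),
\]
where $\boldsymbol \Sigma_x$ and $\boldsymbol \Sigma_w$ are fixed covariances of the gauge-transformed system. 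The Cauchy--Schwarz inequality in the Frobenius/trace pairing then immediately gives
\[
|J(\boldsymbol \theta')-J(\boldsymbol \theta)|\le\bigl(\NF{\boldsymbol \Sigma_x}+\NF{\boldsymbol \Sigma_w}\bigr)\,\NF{\mathbf M(\boldsymbol \theta')-\mathbf M(\boldsymbol \theta)},
\]
so the task is reduced to bounding $\NF{\mathbf M(\boldsymbol \theta')-\mathbf M(\boldsymbol \theta)}$ in terms of $\NF{\boldsymbol \theta'-\boldsymbol \theta}$.

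Next, setting $\mathbf A_{\boldsymbol \theta}:=\mathbf A-\mathbf B\boldsymbol \theta$ and iterating the Lyapunov-type identity~\eqref{eq:riccati-bar-m} (which at fixed $\boldsymbol \theta$ is linear in $\mathbf M$) yields the convergent series
\[
\mathbf M(\boldsymbol \theta)=\sum_{t=0}^{\infty}\gamma^{t}(\mathbf A_{\boldsymbol \theta}^{\intercal})^{t}\bigl(\mathbf Q+\boldsymbol \theta^{\intercal}\mathbf R\boldsymbol \theta\bigr)(\mathbf A_{\boldsymbol \theta})^{t},
\]
whose convergence is guaranteed by Assumption~\ref{stable_search}. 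I would then pick $\varepsilon(\boldsymbol \theta)$ small enough that every $\boldsymbol \theta'$ with $\NF{\boldsymbol \theta'-\boldsymbol \theta}<\varepsilon(\boldsymbol \theta)$ remains stabilizing and the associated closed-loop map satisfies a uniform contraction $\gamma\,\rho(\mathbf A_{\boldsymbol \theta'})^{2}\le \kappa<1$; an explicit such radius can be obtained from continuity of the spectral radius combined with the bound $\NF{\mathbf A_{\boldsymbol \theta'}-\mathbf A_{\boldsymbol \theta}}\le\|\mathbf B\|\NF{\boldsymbol \theta'-\boldsymbol \theta}$. Writing $\Delta:=\boldsymbol \theta'-\boldsymbol \theta$, I would then telescope the difference $\mathbf M(\boldsymbol \theta')-\mathbf M(\boldsymbol \theta)$ termwise, using $\NF{\boldsymbol \theta'^{\intercal}\mathbf R\boldsymbol \theta'-\boldsymbol \theta^{\intercal}\mathbf R\boldsymbol \theta}\le\|\mathbf R\|\bigl(2\|\boldsymbol \theta\|+\NF{\Delta}\bigr)\NF{\Delta}$ for the quadratic-in-$\boldsymbol \theta$ part and the standard identity $(\mathbf A_{\boldsymbol \theta'})^{t}-(\mathbf A_{\boldsymbol \theta})^{t}=\sum_{s=0}^{t-1}(\mathbf A_{\boldsymbol \theta'})^{s}(\mathbf B\Delta)(\mathbf A_{\boldsymbol \theta})^{t-s-1}$ for the matrix-power part. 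Summing the geometric series in $t$ then produces a bound of the form $\NF{\mathbf M(\boldsymbol \theta')-\mathbf M(\boldsymbol \theta)}\le c(\boldsymbol \theta)\NF{\Delta}$, from which the desired Lipschitz constant $L_2(\boldsymbol \theta)$ follows.

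The main obstacle is to produce an \emph{explicit} radius $\varepsilon(\boldsymbol \theta)$ (depending only on $\|\mathbf M(\boldsymbol \theta)\|$, $\|\mathbf B\|$, $\|\mathbf R\|$, $\gamma$, and the contraction constant) inside which (i) $\mathbf A_{\boldsymbol \theta'}$ stays stable and (ii) the infinite series defining $\mathbf M(\boldsymbol \theta')$ does not blow up. Because the gauge transformation of~\cite{Jalal2019Automatica} collapses the nonzero-sum interaction into a single-agent Lyapunov recursion in the block-diagonal $\boldsymbol \theta=\DIAG(\theta,\bar\theta)$, the stacked matrices $(\mathbf A,\mathbf B,\mathbf Q,\mathbf R)$ play exactly the role of the system matrices in~\cite[Lemma 15]{malik2020derivative}, so the explicit formulas for $\varepsilon(\boldsymbol \theta)$ and $L_2(\boldsymbol \theta)$ in that reference transfer almost verbatim to our setting.
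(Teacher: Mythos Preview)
The paper omits the proof entirely due to space limitations, merely pointing to~\cite[Lemma~15]{malik2020derivative} for the explicit constants, so there is no detailed argument to compare against. Your proposal is correct and is precisely the route the paper signals: reduce to a single-agent Lyapunov recursion in the stacked matrices $(\mathbf A,\mathbf B,\mathbf Q,\mathbf R)$ via the gauge transformation, then invoke the perturbation analysis of~\cite{malik2020derivative}; the only minor caution is that the step ``$\gamma\,\rho(\mathbf A_{\boldsymbol \theta'})^{2}\le\kappa<1$'' should be phrased in terms of a norm (or equivalently through $\|\mathbf M(\boldsymbol\theta)\|$ and $\sigma_{\min}(\mathbf Q)$ as in the cited reference), since the spectral radius alone does not control $\NF{(\mathbf A_{\boldsymbol\theta'})^{t}}$ termwise.
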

\begin{proof}
 The proof is omitted due to space limitation. 
\end{proof}

\begin{lemma}[Locally Lipschitz gradient]	\label{lemma:LLG}
	For any $\boldsymbol \theta '$ satisfying the inequality $\NF{\boldsymbol \theta' - \boldsymbol \theta} < \varepsilon(\boldsymbol \theta)$, there exists a  positive constant  $L_3(\boldsymbol \theta )$  such that
	\begin{equation}
	\Compress
	\NF{[\nabla J_{\theta}(\boldsymbol \theta'), \nabla J_{\bar \theta}(\boldsymbol \theta')] - [\nabla J_{\theta}(\boldsymbol \theta), \nabla J_{\bar \theta}(\boldsymbol \theta)]}\leq L_3(\boldsymbol \theta)\NF{\boldsymbol \theta'-\boldsymbol \theta},
	\end{equation}
		where the explicit expressions of  $\varepsilon(\boldsymbol \theta)$ and $L_3(\boldsymbol \theta)$ can be obtained in a similar manner as~\cite[Lemma  16]{malik2020derivative}.
\end{lemma}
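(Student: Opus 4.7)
The plan is to leverage the closed-form gradient in Lemma~\ref{lemma:gradient_formulation}, namely
$[\nabla_\theta J(\boldsymbol\theta),\nabla_{\bar\theta} J(\boldsymbol\theta)] = 2\mathbf P_n\mathbf E_{\boldsymbol\theta}\boldsymbol\Sigma_{\boldsymbol\theta}$, and reduce the Lipschitz bound on the gradient to Lipschitz bounds on the two $\boldsymbol\theta$-dependent factors $\mathbf E_{\boldsymbol\theta}$ and $\boldsymbol\Sigma_{\boldsymbol\theta}$. Writing the increment with a product-rule decomposition,
\begin{equation}
\mathbf E_{\boldsymbol\theta'}\boldsymbol\Sigma_{\boldsymbol\theta'}-\mathbf E_{\boldsymbol\theta}\boldsymbol\Sigma_{\boldsymbol\theta}
=(\mathbf E_{\boldsymbol\theta'}-\mathbf E_{\boldsymbol\theta})\boldsymbol\Sigma_{\boldsymbol\theta'}+\mathbf E_{\boldsymbol\theta}(\boldsymbol\Sigma_{\boldsymbol\theta'}-\boldsymbol\Sigma_{\boldsymbol\theta}),
\end{equation}
and applying the triangle and sub-multiplicative inequalities to $\|\cdot\|_F$, the task becomes bounding (i) $\|\mathbf E_{\boldsymbol\theta'}-\mathbf E_{\boldsymbol\theta}\|_F$, (ii) $\|\boldsymbol\Sigma_{\boldsymbol\theta'}-\boldsymbol\Sigma_{\boldsymbol\theta}\|_F$, and uniformly controlling the residual factors $\|\boldsymbol\Sigma_{\boldsymbol\theta'}\|$ and $\|\mathbf E_{\boldsymbol\theta}\|$ on the ball $\NF{\boldsymbol\theta'-\boldsymbol\theta}<\varepsilon(\boldsymbol\theta)$.

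The central analytic ingredient is a local Lipschitz estimate for the value matrix $\mathbf M(\boldsymbol\theta)$ appearing in~\eqref{eq:riccati-bar-m}. Since, for a stable $\boldsymbol\theta$, $\mathbf M(\boldsymbol\theta)$ is the unique solution to a discrete Lyapunov equation with closed-loop matrix $\mathbf A-\mathbf B\boldsymbol\theta$, one can subtract the two Lyapunov identities at $\boldsymbol\theta$ and $\boldsymbol\theta'$ and solve for $\mathbf M(\boldsymbol\theta')-\mathbf M(\boldsymbol\theta)$; the resulting operator inverse is bounded as long as $\rho(\mathbf A-\mathbf B\boldsymbol\theta')<1$, which can be guaranteed by choosing $\varepsilon(\boldsymbol\theta)$ sufficiently small thanks to continuity of the spectral radius and Assumption~\ref{stable_search}. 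This yields $\|\mathbf M(\boldsymbol\theta')-\mathbf M(\boldsymbol\theta)\|_F\leq c_M(\boldsymbol\theta)\NF{\boldsymbol\theta'-\boldsymbol\theta}$, from which the Lipschitz bound on $\mathbf E_{\boldsymbol\theta}=(\mathbf R+\gamma\mathbf B^\intercal\mathbf M(\boldsymbol\theta)\mathbf B)\boldsymbol\theta-\gamma\mathbf B^\intercal\mathbf M(\boldsymbol\theta)\mathbf A$ follows directly by linearity in $\mathbf M$ and $\boldsymbol\theta$, together with a uniform bound on $\NF{\boldsymbol\theta'}$. An entirely parallel argument applied to the Lyapunov equation satisfied by $\boldsymbol\Sigma_{\boldsymbol\theta}$, namely $\boldsymbol\Sigma_{\boldsymbol\theta}=(1-\gamma)\mathbb{E}[\mathbf x_1\mathbf x_1^\intercal]+\gamma(\mathbf A-\mathbf B\boldsymbol\theta)\boldsymbol\Sigma_{\boldsymbol\theta}(\mathbf A-\mathbf B\boldsymbol\theta)^\intercal$, yields $\|\boldsymbol\Sigma_{\boldsymbol\theta'}-\boldsymbol\Sigma_{\boldsymbol\theta}\|_F\leq c_\Sigma(\boldsymbol\theta)\NF{\boldsymbol\theta'-\boldsymbol\theta}$.

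Combining these estimates with the trivial bound $\|\mathbf P_n\|\leq 1$ produces a constant $L_3(\boldsymbol\theta)$ that is a polynomial in $\|\mathbf A\|$, $\|\mathbf B\|$, $\|\mathbf R\|$, $\|\mathbf M(\boldsymbol\theta)\|$, $\|\boldsymbol\Sigma_{\boldsymbol\theta}\|$, $\NF{\boldsymbol\theta}$ and the spectral-radius margin $1-\rho(\mathbf A-\mathbf B\boldsymbol\theta)$, matching the form in~\cite[Lemma~16]{malik2020derivative} modulo the gauge-transformed block structure inherited from Section~\ref{sec:main}.

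The step that I expect to be the main obstacle is the careful choice of $\varepsilon(\boldsymbol\theta)$: it must simultaneously be small enough to (a) preserve stability of the perturbed closed loop, and (b) give a uniform upper bound on $\|\boldsymbol\Sigma_{\boldsymbol\theta'}\|$ and on the inverse operator $(I-\gamma(\mathbf A-\mathbf B\boldsymbol\theta')\otimes(\mathbf A-\mathbf B\boldsymbol\theta'))^{-1}$. Following the approach of~\cite{fazel2018global,malik2020derivative}, one picks $\varepsilon(\boldsymbol\theta)$ proportional to a negative power of $\|\mathbf M(\boldsymbol\theta)\|$ and of $\|\mathbf B\|$, which guarantees via a Neumann-series argument that both requirements hold and that the estimates above remain well-defined on the whole ball; we omit the tedious but routine bookkeeping and refer the reader to the cited lemma for the explicit formulas.
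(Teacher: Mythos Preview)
Your proposal is correct and follows precisely the route the paper gestures at: the paper omits the proof entirely and refers to \cite[Lemma~16]{malik2020derivative}, and your product-rule decomposition of $2\mathbf P_n\mathbf E_{\boldsymbol\theta}\boldsymbol\Sigma_{\boldsymbol\theta}$ together with Lyapunov-perturbation bounds on $\mathbf M(\boldsymbol\theta)$ and $\boldsymbol\Sigma_{\boldsymbol\theta}$ is exactly that argument adapted to the block-diagonal gauge variables. One small correction: the recursion for $\boldsymbol\Sigma_{\boldsymbol\theta}$ also picks up the noise covariance $\gamma\,\mathbb{E}[\mathbf w_t\mathbf w_t^\intercal]$ as an additive constant, but this does not alter your perturbation analysis.
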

\begin{proof}
The proof is omitted due to space limitation. 
\end{proof}

\begin{theorem}[Global convergence via model-based gradient]\label{MDTHm}
Let Assumptions~\ref{ass:existence},~\ref{stable_search},~\ref{ass:positive_sigma} and~\ref{ass:positive_Q} hold.  For a sufficiently small  fixed step size $\eta$ chosen as 
$\eta=\Poly\big(\frac{\mu\sigma_{\text{min}}(\mathbf{Q})}{J (\boldsymbol \theta_1)},\frac{1}{\sqrt{\gamma}\rVert\mathbf{A}\lVert},\frac{1}{\sqrt{\gamma}\rVert\mathbf{B}\lVert},\frac{1}{\rVert\mathbf{R}\lVert},\sigma_{\text{min}}(\mathbf{R})\big)$,
and for a sufficiently large  number of iterations $K$ such that
$K \geq\frac{\rVert\mathbf{\Sigma_{\boldsymbol \theta^*}}\lVert}{\mu}\log\frac{J (\boldsymbol \theta_1)-J (\boldsymbol \theta^*)}{\varepsilon} \Poly\big(\frac{J (\boldsymbol \theta_1)}{\mu\sigma_{\text{min}}(\mathbf{Q})},{\sqrt{\gamma}\rVert\mathbf{A}\lVert},{\sqrt{\gamma}\rVert\mathbf{B}\lVert},{\rVert\mathbf{R}\lVert}$, $\frac{1}{\sigma_{\text{min}}(\mathbf{R})}\big)$,
the gradient descent algorithm~\eqref{eq:GD}  leads to  the following   bound:
$
J (\boldsymbol \theta_K)-J (\boldsymbol \theta^*)\leq\varepsilon
$.
In particular, for a fixed  step size $
\eta=\frac{1}{\|\mathbf{P}_n^\intercal \mathbf{P}_n\|(\rVert\mathbf{R}\lVert+\frac{\gamma\rVert\mathbf{B}\lVert^2J (\boldsymbol \theta_1)}{\mu})}$ 
and for a sufficiently  large number of iterations $K$, i.e.,
$K\geq\frac{\rVert\mathbf{\Sigma_{\boldsymbol \theta^*}}\lVert \|\mathbf P_n^\intercal \mathbf P_n\| }{\mu}\big(\frac{\rVert\mathbf{R}\lVert}{\sigma_{\text{min}}(\mathbf{R})}+\frac{\gamma\rVert\mathbf{B}\lVert^2J (\boldsymbol \theta_1)}{\mu\sigma_{\text{min}}(\mathbf{R})}\big)\log\frac{J (\boldsymbol \theta_1)-J (\boldsymbol \theta^*)}{\varepsilon}$,
the natural policy gradient descent algorithm~\eqref{eq:NPGD} enjoys the bound: $
J (\boldsymbol \theta_K)-J (\boldsymbol \theta^*)\leq\varepsilon$.
\end{theorem}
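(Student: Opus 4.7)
The plan is to adapt the Polyak--Lojasiewicz-plus-smoothness convergence argument of single-agent LQR policy optimization (cf.~\cite{fazel2018global}) to the deep-structured game setting, with Lemma~\ref{lemma:gradient_formulation} supplying the gradient formula, Lemma~\ref{lemma:GD} the PL-type lower bound, and Lemma~\ref{lemma:LLG} the local smoothness.

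For the policy gradient update~\eqref{eq:GD}, I would first derive a one-step progress inequality. Provided the increment $\eta\NF{[\nabla_\theta J(\boldsymbol \theta_k), \nabla_{\bar\theta} J(\boldsymbol \theta_k)]}$ stays within the local radius $\varepsilon(\boldsymbol \theta_k)$, the standard descent-lemma calculation using Lemma~\ref{lemma:LLG} yields
\begin{equation}
J(\boldsymbol \theta_{k+1}) \leq J(\boldsymbol \theta_k) - \eta\Big(1 - \tfrac{\eta L_3(\boldsymbol \theta_k)}{2}\Big)\NF{[\nabla_\theta J(\boldsymbol \theta_k), \nabla_{\bar\theta} J(\boldsymbol \theta_k)]}^2.
\end{equation}
Taking $\eta$ small enough that the bracket is at least $1/2$ and then substituting the PL inequality of Lemma~\ref{lemma:GD} yields the geometric contraction
\begin{equation}
J(\boldsymbol \theta_{k+1}) - J(\boldsymbol \theta^\ast) \leq \Big(1 - \tfrac{\eta}{2 L_1(\boldsymbol \theta^\ast)}\Big)\big(J(\boldsymbol \theta_k) - J(\boldsymbol \theta^\ast)\big),
\end{equation}
which telescopes to the stated iteration count $K$ for $\varepsilon$-accuracy.

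The recursion requires that every iterate lie in a region where Lemma~\ref{lemma:LLG} is valid. The natural certificate is the sublevel set $\mathcal{S} := \{\boldsymbol \theta : J(\boldsymbol \theta) \leq J(\boldsymbol \theta_1)\}$, which by Assumption~\ref{stable_search} consists only of stable policies; on $\mathcal{S}$ both $\NF{\mathbf{M}(\boldsymbol \theta)}$ and $\NF{\boldsymbol \Sigma_{\boldsymbol \theta}}$ can be bounded uniformly in terms of $J(\boldsymbol \theta_1)$ and the problem data, yielding uniform upper bounds on $L_3(\boldsymbol \theta)$ and uniform lower bounds on $\varepsilon(\boldsymbol \theta)$. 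This is precisely what lets the $\Poly(\cdot)$ step-size in the theorem replace the per-iterate Lipschitz constants. A short induction on $k$ then closes the loop: the step is small enough that the update does not escape the uniform local radius, so the descent inequality is valid and $J(\boldsymbol \theta_{k+1}) \leq J(\boldsymbol \theta_1)$ keeps $\boldsymbol \theta_{k+1}\in\mathcal{S}$.

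For the natural policy gradient variant~\eqref{eq:NPGD}, Lemma~\ref{lemma:gradient_formulation} shows that the preconditioned direction reduces (up to the $\DIAG$ structure) to $2\mathbf{P}_n\mathbf{E}_{\boldsymbol \theta}$, eliminating the $\boldsymbol \Sigma_{\boldsymbol \theta}$ dependence from the first-order progress term; the closed-form step size in the theorem is precisely what one obtains by bounding the second-order term by $\|\mathbf{P}_n^\intercal\mathbf{P}_n\|(\|\mathbf{R}\|+\gamma\|\mathbf{B}\|^2 J(\boldsymbol \theta_1)/\mu)$ uniformly over $\mathcal{S}$. Reapplying Lemma~\ref{lemma:GD} then yields the claimed linear rate with the stated iteration count. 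The main technical obstacle I anticipate is the uniform control of $\varepsilon(\boldsymbol \theta)$ and $L_3(\boldsymbol \theta)$ on $\mathcal{S}$: both degenerate as $\boldsymbol \theta$ approaches the stability boundary, so the induction ultimately depends on a careful perturbation analysis of the block-structured discounted Lyapunov equation~\eqref{eq:riccati-bar-m} combined with the gauge decomposition of the game --- this is what the $\Poly(\cdot)$ factors in the step size encode.
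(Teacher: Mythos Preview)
Your proposal is correct and follows essentially the same route as the paper: both defer to the Fazel et al.\ PL-plus-smoothness template, invoking Lemma~\ref{lemma:GD} for the gradient-dominance lower bound and Lemma~\ref{lemma:LLG} for local smoothness, and for NPGD both reduce the preconditioned direction to $2\mathbf P_n\mathbf E_{\boldsymbol\theta}$ and derive the one-step contraction $J(\boldsymbol\theta_{K+1})-J(\boldsymbol\theta^\ast)\le\bigl(1-\eta\mu\sigma_{\min}(\mathbf R)/\|\boldsymbol\Sigma_{\boldsymbol\theta^\ast}\|\bigr)\bigl(J(\boldsymbol\theta_K)-J(\boldsymbol\theta^\ast)\bigr)$. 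Your version is in fact more explicit than the paper's sketch---the sublevel-set induction and the uniform control of $L_3(\boldsymbol\theta)$, $\varepsilon(\boldsymbol\theta)$ on $\mathcal S$ are exactly what the paper suppresses behind the citation to \cite[Theorem~7]{fazel2018global}.
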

\begin{proof}
Following the proof technique  in~\cite[Theorem 7]{fazel2018global}, we choose a  sufficiently  small  step size $\eta$ such  that the value of the cost decreases at each iteration.  More precisely,  for the natural policy gradient descent at iteration $K$, 
	$J ({\boldsymbol \theta}_{K+1})-J (\boldsymbol \theta^*)\leq
	(1- \frac{\mu  \sigma_{\text{min}}(\mathbf{R})}{\|\mathbf{P}_n^\intercal \mathbf{P}_n\|(\rVert\mathbf{R}\lVert+\frac{\gamma\rVert\mathbf{B}\lVert^2J(\boldsymbol{ \theta_1})}{\mu})\rVert \mathbf{\Sigma_{\boldsymbol \theta^*}}\lVert})(J ({\boldsymbol \theta}_K)-J (\boldsymbol \theta^*)) = (1-\eta \frac{\mu  \sigma_{\text{min}}(\mathbf{R})}{\rVert \mathbf{\Sigma_{\boldsymbol \theta^*}}\lVert})(J ({\boldsymbol \theta}_K)-J (\boldsymbol \theta^*))$.  The above recursion is contractive  for the specified~$\eta$.
\end{proof}

%

\subsection{Model-free solution using policy optimization}
It is desired now to develop a model-free RL algorithm.

\begin{lemma}[Finite-horizon approximation]\label{lemma:rollout}
	For any $\boldsymbol \theta$ with finite $J(\boldsymbol \theta)$, define  ${\tilde J_T}(\boldsymbol \theta):=(1-\gamma)\mathbb{E}[\sum_{t=1}^{T} \gamma^{t-1} c_t]$ and $\tilde{\boldsymbol \Sigma}_{\boldsymbol \theta}=(1-\gamma)\mathbb{E}[\sum_{t=1}^{T} \gamma^{t-1} \mathbf x_t \mathbf x_t^\intercal]$.   Let $\varepsilon(T):= \frac{d_x(J (\boldsymbol \theta))^2}{(1-\gamma)T\mu\sigma_{\text{min}}^2(\mathbf{Q})}
	$ and 
	$
\bar \varepsilon(T):=\varepsilon(T) (\rVert\mathbf{Q}\lVert+\rVert\mathbf{R}\lVert\rVert\mathbf{\boldsymbol{ \theta}}\lVert^2)
	$, 
	then $\norm{\tilde{\boldsymbol \Sigma}_{\boldsymbol \theta}-\boldsymbol \Sigma_{\boldsymbol \theta}} \leq \varepsilon(T) $ and $|{\tilde J _T}(\boldsymbol \theta)-J (\boldsymbol \theta)|\leq\bar \varepsilon(T)$.
\end{lemma}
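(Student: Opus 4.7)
My plan is to express both approximation errors as geometric tails of the discounted infinite sums defining $J(\boldsymbol \theta)$ and $\boldsymbol \Sigma_{\boldsymbol \theta}$, and then control these tails using the quadratic structure of the per-step cost together with the stability Assumption~\ref{stable_search}. From the definitions,
\begin{align}
\boldsymbol \Sigma_{\boldsymbol \theta}-\tilde{\boldsymbol \Sigma}_{\boldsymbol \theta} &= (1-\gamma)\,\Exp{\sum_{t=T+1}^{\infty}\gamma^{t-1}\mathbf x_t \mathbf x_t^\intercal}, \\
J(\boldsymbol \theta)-\tilde J_T(\boldsymbol \theta) &= (1-\gamma)\,\Exp{\sum_{t=T+1}^{\infty}\gamma^{t-1} c_t}.
\end{align}
The tail covariance is positive semidefinite, so its spectral norm is bounded by its trace. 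Under the closed-loop identity $\mathbf u_t=-\boldsymbol \theta \mathbf x_t$, the per-step cost satisfies the sandwich $\sigma_{\min}(\mathbf Q)\|\mathbf x_t\|^2 \leq c_t \leq (\|\mathbf Q\|+\|\mathbf R\|\|\boldsymbol \theta\|^2)\|\mathbf x_t\|^2$, which couples the two errors and reduces both, up to a multiplicative constant, to a tail of the form $(1-\gamma)\sum_{t>T}\gamma^{t-1}\Exp{\|\mathbf x_t\|^2}$.

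The bulk of the argument is to upgrade the trivial tail bound $\leq J(\boldsymbol \theta)/\sigma_{\min}(\mathbf Q)$ into a $1/T$ rate. Following the strategy of~\cite[Lemma~15]{malik2020derivative}, I would proceed in two steps. First, I derive a uniform-in-$t$ bound on $\|\Exp{\mathbf x_t \mathbf x_t^\intercal}\|$ by combining (i)~the discounted Lyapunov structure of $\mathbf M(\boldsymbol \theta)$ in~\eqref{eq:riccati-bar-m}, (ii)~Assumption~\ref{ass:positive_sigma}, which gives $\|\mathbf M(\boldsymbol \theta)\|\leq J(\boldsymbol \theta)/[(1-\gamma)\mu]$, and (iii)~the global bound $\|\boldsymbol \Sigma_{\boldsymbol \theta}\|\leq J(\boldsymbol \theta)/\sigma_{\min}(\mathbf Q)$; together these propagate into a per-step bound that scales like $J(\boldsymbol \theta)^2/[\mu\sigma_{\min}(\mathbf Q)]$ up to $(1-\gamma)$-factors. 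Second, I extract the $1/T$ factor from the geometric tail via $\sum_{t=T+1}^{\infty}\gamma^{t-1}=\gamma^T/(1-\gamma)$ combined with the elementary inequality $\gamma^T\leq 1/[eT(1-\gamma)]$, which itself follows from $\sup_{x>0}xe^{-x}=1/e$ applied to $x=T(1-\gamma)$. Converting the resulting trace bound back to spectral norm via $\TR(\cdot)\leq d_x\|\cdot\|$ introduces the $d_x$ factor and yields the stated $\varepsilon(T)$.

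The bound on the cost error then follows almost immediately: the upper sandwich on $c_t$ implies
\begin{equation}
|J(\boldsymbol \theta)-\tilde J_T(\boldsymbol \theta)| \leq (\|\mathbf Q\|+\|\mathbf R\|\|\boldsymbol \theta\|^2)\,\TR\bigl(\boldsymbol \Sigma_{\boldsymbol \theta}-\tilde{\boldsymbol \Sigma}_{\boldsymbol \theta}\bigr) \leq (\|\mathbf Q\|+\|\mathbf R\|\|\boldsymbol \theta\|^2)\,\varepsilon(T) = \bar \varepsilon(T),
\end{equation}
completing the claim.

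The main obstacle is the uniform-in-$t$ second-moment bound for the state: without the stability Assumption~\ref{stable_search} and a tight norm bound on $\mathbf M(\boldsymbol \theta)$, the tail does not decay polynomially in $T$. Once the closed-form Lyapunov machinery of~\cite{malik2020derivative} is re-derived in the lifted $\mathbf x_t$-space, where $\mathbf A$ and $\mathbf B$ are block-diagonal, the remaining computations are essentially mechanical and the explicit constants can be imported wholesale from~\cite[Lemma~15]{malik2020derivative}.
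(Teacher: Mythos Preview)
The paper omits the proof of this lemma entirely (``The proof is omitted due to space limitation''), so there is no in-paper argument to compare against. Your overall strategy---write both errors as discounted tails, sandwich $c_t$ between $\sigma_{\min}(\mathbf Q)\|\mathbf x_t\|^2$ and $(\|\mathbf Q\|+\|\mathbf R\|\|\boldsymbol\theta\|^2)\|\mathbf x_t\|^2$, obtain a uniform second-moment bound via the Lyapunov structure of $\mathbf M(\boldsymbol\theta)$, and convert $\gamma^T$ into a $1/T$ rate through $\gamma^T\leq 1/[eT(1-\gamma)]$---is exactly the standard route used in the reference the paper itself leans on for the neighbouring Lemmas~\ref{CLip} and~\ref{lemma:LLG}, so it is almost certainly what the authors had in mind.

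One wording slip worth fixing: the inequality $\TR(\cdot)\leq d_x\|\cdot\|$ runs in the wrong direction for what you claim. For a PSD matrix one has $\|\cdot\|\leq \TR(\cdot)\leq d\|\cdot\|$, so going from a trace bound to a spectral-norm bound is free. The dimension factor enters the \emph{other} way: your Lyapunov machinery delivers a uniform bound on $\|\Exp{\mathbf x_t\mathbf x_t^\intercal}\|$, and to sum the tail you pass to $\TR(\Exp{\mathbf x_t\mathbf x_t^\intercal})\leq 2d_x\|\Exp{\mathbf x_t\mathbf x_t^\intercal}\|$ (the lifted state lives in $\mathbb R^{2d_x}$, which also suggests the constant in the statement absorbs a factor of~$2$). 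With that correction the chain of inequalities closes and yields the stated $\varepsilon(T)$ up to absolute constants, after which your final display for $\bar\varepsilon(T)$ is correct as written.
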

\begin{proof}
The proof is omitted due to space limitation.
\end{proof}

Let $\mathbb{S}_r$ be a set of uniformly distributed points with norm $r>0$ (e.g., the surface of a sphere). In addition, let $\mathbb{B}_r$ denote the set of all uniformly distributed points  whose norms are at most $r$ (e.g., all points within the sphere). For a matrix $\tilde{\boldsymbol \theta}=\DIAG(\tilde \theta, \tilde{\bar \theta})$, these distributions are defined over the Frobenius norm ball.  Hence,
$J_r (\boldsymbol \theta)=\mathbb{E}_{\tilde{\boldsymbol \theta} \sim \mathbb{B}_r}[J (\boldsymbol \theta+\tilde{\boldsymbol \theta})]$.
 Since the expectation can be expressed as an integral function, one can use  Stokes' formula  to  compute   the gradient of $J_r (\boldsymbol \theta )$ with only query access to the function values.

\begin{lemma}[Zeroth-order optimization]\label{lemma:smooth}
For a smoothing factor $r >0$, 
$
[\nabla_{ \theta} J ( \boldsymbol \theta),
\nabla_{\bar \theta} J (\boldsymbol  \theta)
]
= \frac{2d_xd_u}{r^2}  \mathbb{E}_{\tilde{\boldsymbol \theta} \sim \mathbb{S}_r }[J (\boldsymbol \theta+\tilde{\boldsymbol \theta})[\tilde \theta, \tilde{\bar \theta}]]$.
\end{lemma}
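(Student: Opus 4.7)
The plan is to apply the Stokes/divergence theorem identity that underlies derivative-free (``zeroth-order'') optimization, adapted to matrix-valued parameters. The key observation is that because $\boldsymbol \theta = \DIAG(\theta, \bar \theta)$ has a fixed block-diagonal structure, the free parameters live in $\mathbb{R}^{d_u \times d_x} \oplus \mathbb{R}^{d_u \times d_x}$, which is a Euclidean space of total dimension $d := 2 d_u d_x$ under the Frobenius inner product. This is precisely why the prefactor $2 d_x d_u$ appears on the right-hand side.

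The first step is to introduce the ball-smoothed objective
\begin{equation}
J_r(\boldsymbol \theta) := \mathbb{E}_{\tilde{\boldsymbol \theta} \sim \mathbb{B}_r}\big[ J(\boldsymbol \theta + \tilde{\boldsymbol \theta})\big] = \frac{1}{\text{vol}(\mathbb{B}_r)} \int_{\mathbb{B}_r} J(\boldsymbol \theta + \tilde{\boldsymbol \theta})\, d\tilde{\boldsymbol \theta},
\end{equation}
so that the identity in the lemma is naturally read as an expression for $\nabla J_r(\boldsymbol \theta)$ (with the understanding that $J_r \to J$ as $r \to 0$). Using the local smoothness of $J$ from Lemmas~\ref{CLip}--\ref{lemma:LLG}, I would differentiate under the integral to replace $\nabla J_r$ by the ball-average of $\nabla J$, and then invoke the divergence theorem to rewrite that volume integral of $\nabla J$ as a surface integral of $J$ against the outward unit normal $\tilde{\boldsymbol \theta}/r$ on the Frobenius sphere $\mathbb{S}_r$. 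Renormalizing that surface integral as an expectation over the uniform law on $\mathbb{S}_r$ produces the ratio $\text{area}(\mathbb{S}_r)/\text{vol}(\mathbb{B}_r) = d/r$, yielding
\begin{equation}
\nabla J_r(\boldsymbol \theta) = \frac{d}{r^2}\, \mathbb{E}_{\tilde{\boldsymbol \theta} \sim \mathbb{S}_r}\big[ J(\boldsymbol \theta + \tilde{\boldsymbol \theta})\, \tilde{\boldsymbol \theta}\big].
\end{equation}
Reading off the two diagonal blocks of $\tilde{\boldsymbol \theta} = \DIAG(\tilde \theta, \tilde{\bar \theta})$ then recovers the stated pair $[\nabla_{\theta} J,\; \nabla_{\bar \theta} J]$ with the prefactor $2 d_x d_u / r^2$.

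The main obstacle is handling the matrix-valued divergence theorem cleanly and keeping the dimension count honest. The most transparent route is to vectorize: identify $\boldsymbol \theta$ with $\VEC(\theta, \bar \theta) \in \mathbb{R}^d$, observe that under this isometry the Frobenius ball on block-diagonal perturbations coincides with the Euclidean $\ell_2$ ball in $\mathbb{R}^d$, apply the standard scalar divergence theorem there, and then undo $\VEC$ at the end to recover the block form. A secondary concern is integrability: $J$ is $+\infty$ off the stabilizing set, so one must choose $r$ smaller than the Frobenius distance from $\boldsymbol \theta$ to the stability boundary (quantified by the radius $\varepsilon(\boldsymbol \theta)$ of Lemma~\ref{CLip}) to ensure $\boldsymbol \theta + \tilde{\boldsymbol \theta}$ remains stable throughout $\mathbb{B}_r$, so that $J$ is continuously differentiable there and the integral manipulations above are legitimate.
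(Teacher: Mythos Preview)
Your proposal is correct and follows essentially the same approach as the paper: the paper's own proof simply invokes the standard zeroth-order optimization identity from \cite[Lemma~1]{flaxman2004online}, which is exactly the Stokes/divergence-theorem argument you spell out after vectorizing the block-diagonal parameter into $\mathbb{R}^{2d_xd_u}$. If anything, you supply more detail than the paper does, including the care about choosing $r$ small enough to stay in the stabilizing set.
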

\begin{proof}
The proof follows directly from the zeroth-order optimization approach~\cite[Lemma 1]{flaxman2004online}.
%
\end{proof}

\begin{lemma}\label{lemma:ber}
	Let $\tilde {\boldsymbol \theta}_1, \ldots, \tilde {\boldsymbol \theta}_L$, $L \in \mathbb{N}$, be   i.i.d. samples drawn uniformly  from $\mathbb{S}_r$. There exists  $\varepsilon(L):= \Poly(1/L) >0$,  such that  $[
\tilde \nabla^L_{ \theta} J ( \boldsymbol \theta),
\tilde \nabla^L_{\bar \theta} J (\boldsymbol  \theta)
]
= \frac{2d_xd_u}{r^2 L}   \sum_{l=1}^{L}J (\boldsymbol \theta+\tilde{\boldsymbol \theta}_l)[\tilde \theta, \tilde{\bar \theta}]$  converges  to $[\nabla_{ \theta} J ( \boldsymbol \theta),\nabla_{\bar \theta} J (\boldsymbol  \theta)]$ in the Frobenius norm  with a probability greater than $1-({\frac{2d_xd_u}{\varepsilon(L)}})^{-2d_xd_u}$.
From Lemma~\ref{lemma:rollout},  there exists $\varepsilon(L,T):=\Poly(1/L, 1/T)>0$ such that 
$[
\tilde \nabla^{L,T}_{ \theta} J ( \boldsymbol \theta),
\tilde \nabla^{L,T}_{\bar \theta} J (\boldsymbol  \theta)
]
= \frac{2d_xd_u(1-\gamma)}{r^2 L} \sum_{l=1}^{L}[\sum_{t=1}^{T}\gamma^{t-1}(c_t)][\tilde \theta_l, \tilde{\bar \theta}_l]$
is $\varepsilon(L,T)$ close to $[\nabla_{ \theta} J ( \boldsymbol \theta),\nabla_{\bar \theta} J (\boldsymbol  \theta)]$   with a probability greater than $1-({\frac{2d_xd_u}{\varepsilon(L,T)}})^{-2d_xd_u}$ in the Frobenius norm.
	\end{lemma}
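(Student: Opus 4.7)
The plan is to combine three ingredients: (i) the exact gradient identity of Lemma~\ref{lemma:smooth}, which makes the perturbation-based estimator unbiased; (ii) a concentration argument over the $L$ i.i.d.\ samples $\tilde{\boldsymbol\theta}_l$; and (iii) the rollout truncation bound of Lemma~\ref{lemma:rollout} to replace $J$ by a finite-horizon empirical cost. I will treat the two claims in sequence.

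First, for the infinite-horizon estimator, write $Y_l := \frac{2d_xd_u}{r^2}J(\boldsymbol\theta+\tilde{\boldsymbol\theta}_l)[\tilde\theta_l,\tilde{\bar\theta}_l]$. By Lemma~\ref{lemma:smooth}, $\mathbb{E}[Y_l] = [\nabla_{\theta}J(\boldsymbol\theta),\nabla_{\bar\theta}J(\boldsymbol\theta)]$, so $\tilde\nabla^L J = \frac{1}{L}\sum_{l=1}^L Y_l$ is unbiased. To obtain a uniform bound on $\NF{Y_l}$, I would pick $r$ small enough so that every perturbed policy $\boldsymbol\theta+\tilde{\boldsymbol\theta}_l$ lies in the local Lipschitz ball of Lemma~\ref{CLip}; then $|J(\boldsymbol\theta+\tilde{\boldsymbol\theta}_l)|\le J(\boldsymbol\theta)+L_2(\boldsymbol\theta)r$ and hence $\NF{Y_l}\le \frac{2d_xd_u}{r}(J(\boldsymbol\theta)+L_2(\boldsymbol\theta)r)$. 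With a uniform bound in hand, apply a vector Hoeffding-type inequality entrywise to the $2d_xd_u$ scalar entries of the matrix $[\tilde\theta_l,\tilde{\bar\theta}_l]$, and take a union bound over entries. This produces the deviation bound $\NF{\tilde\nabla^L J-\nabla J}\le \varepsilon(L)$ with $\varepsilon(L)=\Poly(1/L)$, and the failure probability is bounded by summing the per-entry tail probabilities, yielding the stated form $1-(2d_xd_u/\varepsilon(L))^{-2d_xd_u}$ after choosing the exponent to match the entrywise Hoeffding tail.

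Second, for the finite-horizon estimator, I would decompose the error via the triangle inequality into three pieces: (a) the gap between $\tilde\nabla^{L,T}J$ and the same quantity with $J(\boldsymbol\theta+\tilde{\boldsymbol\theta}_l)$ replacing the $T$-step empirical discounted cost, (b) the gap between that intermediate object and the infinite-horizon sample-mean estimator $\tilde\nabla^L J$, and (c) the gap already controlled in the first part. For piece (a), concentration of the random rollout $\sum_{t=1}^T\gamma^{t-1}c_t$ around $\tilde J_T(\boldsymbol\theta+\tilde{\boldsymbol\theta}_l)$ follows from the fact that the per-step cost is quadratic with uniformly bounded second moment along a stable trajectory (standard under Assumption~\ref{stable_search}); this yields a $\Poly(1/L,1/T)$ bound via another Hoeffding-style inequality applied to the scaled quadratic summands. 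For piece (b), Lemma~\ref{lemma:rollout} directly supplies $|\tilde J_T-J|\le \bar\varepsilon(T)=\Poly(1/T)$ uniformly over the $\tilde{\boldsymbol\theta}_l$ (after again localizing by choosing $r$ small). Combining the three contributions and setting $\varepsilon(L,T)=\Poly(1/L,1/T)$ as the sum of the polynomial rates gives the desired closeness with the stated probability, again via a union bound over the $2d_xd_u$ entries.

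The main obstacle will be piece (a): the rollout cost is an unbounded quadratic functional of the noise process, so a naive Hoeffding does not apply. I would handle this either by truncating the state process at a high-probability radius (using the sub-Gaussian/finite-moment assumption on $\{w_t^i\}$ and stability of $\mathbf A-\mathbf B\boldsymbol\theta$) and paying a small union-bound cost, or by invoking a Bernstein-type inequality for sub-exponential random variables. Either route is technical but follows the template of~\cite[Lemma~16]{malik2020derivative}. The remaining pieces are essentially bookkeeping once the local Lipschitz constants from Lemmas~\ref{CLip} and~\ref{lemma:LLG} are fixed and the radius $r$ is chosen accordingly.
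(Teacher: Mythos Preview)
The paper omits the proof of Lemma~\ref{lemma:ber} entirely (``The proof is omitted due to space limitation''), so there is no argument to compare your proposal against line by line. That said, the surrounding text gives strong hints about the intended route: the proof of Theorem~\ref{thm:RL} explicitly invokes ``the Bernstein inequality'' together with Lemmas~\ref{lemma:LLG} and~\ref{lemma:ber}, and the whole model-free subsection is patterned after~\cite{fazel2018global} and~\cite{malik2020derivative}. Your three-ingredient plan (unbiasedness from Lemma~\ref{lemma:smooth}, concentration over the $L$ samples, and rollout truncation via Lemma~\ref{lemma:rollout}) is exactly the template those references use, and your observation that the quadratic cost is unbounded so that Bernstein rather than Hoeffding is the right tool matches what the authors signal in the proof of Theorem~\ref{thm:RL}.

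One point worth tightening: as written, Lemma~\ref{lemma:smooth} is stated for $\nabla J$, but the Stokes/Flaxman identity it cites actually yields the gradient of the smoothed objective $J_r$, not of $J$ itself. Your estimator $Y_l$ is therefore unbiased for $\nabla J_r$, and an additional $O(r)$ bias term $\NF{\nabla J_r - \nabla J}$ has to be absorbed via the local Lipschitz gradient bound of Lemma~\ref{lemma:LLG}. You allude to Lemma~\ref{lemma:LLG} only at the very end; it should enter already in the first part of the argument, before the concentration step, to justify that the limiting object is the true gradient rather than its smoothed surrogate. Apart from this, your decomposition and your identification of piece~(a) as the genuine technical obstacle are sound and in line with what the paper's references do.
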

\begin{proof}
	The proof is omitted due to space limitation.
\end{proof}

\begin{theorem}[Global convergence via model-free gradient]\label{thm:RL}
Let Assumptions~\ref{ass:existence},~\ref{stable_search},~\ref{ass:positive_sigma} and~\ref{ass:positive_Q} hold. For a sufficiently large horizon $T$ and samples $L$,  model-free gradient descent and natural policy gradient decent with the empirical gradient in Lemma~\ref{lemma:ber} and covariance matrix  in Lemma~\ref{lemma:rollout} converge to the model-based solutions  in Theorem~\ref{MDTHm}. In particular, the gradient descent algorithm converges with a probability greater than $1-({\frac{2d_xd_u}{\varepsilon(L,T)}})^{-2d_xd_u}$,  where $\varepsilon(L,T)=\Poly(1/L, 1/T)$.
\end{theorem}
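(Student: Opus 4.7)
The plan is to reduce the model-free convergence claim to the model-based result of Theorem~\ref{MDTHm} through a perturbation argument that quantifies how the empirical gradient deviates from the true gradient and how such a deviation propagates through the descent recursion. Concretely, at iteration $k$ we replace the true gradient used in~\eqref{eq:GD} (or its natural-gradient counterpart~\eqref{eq:NPGD}) by the zeroth-order estimator $[\tilde\nabla^{L,T}_{\theta}J(\boldsymbol\theta_k),\tilde\nabla^{L,T}_{\bar\theta}J(\boldsymbol\theta_k)]$ from Lemma~\ref{lemma:ber}, which, together with the finite-horizon truncation error controlled by Lemma~\ref{lemma:rollout}, is within $\varepsilon(L,T)=\Poly(1/L,1/T)$ of the exact gradient in Lemma~\ref{lemma:gradient_formulation} with probability at least $1-(2d_xd_u/\varepsilon(L,T))^{-2d_xd_u}$.

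The next step is to transport the model-based one-step descent guarantee to the noisy update. By Lemma~\ref{lemma:LLG} the gradient is locally Lipschitz with constant $L_3(\boldsymbol\theta_k)$ on a neighborhood of radius $\varepsilon(\boldsymbol\theta_k)$, while Lemma~\ref{CLip} ensures the cost is locally Lipschitz on the same neighborhood. Combining these with the PL inequality of Lemma~\ref{lemma:GD}, a standard inexact gradient calculation (see the argument in the proof of Theorem~\ref{MDTHm}) yields a bound of the form
\begin{equation}
J(\boldsymbol\theta_{k+1})-J(\boldsymbol\theta^*)\le \Bigl(1-\tfrac{\eta\mu\sigma_{\text{min}}(\mathbf R)}{\|\boldsymbol\Sigma_{\boldsymbol\theta^*}\|}\Bigr)\bigl(J(\boldsymbol\theta_k)-J(\boldsymbol\theta^*)\bigr)+c(\boldsymbol\theta_k)\,\varepsilon(L,T),
\end{equation}
provided the sampling radius $r$ and the step size $\eta$ are small enough that the noisy update remains inside the Lipschitz neighborhood of $\boldsymbol\theta_k$. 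Choosing $L,T$ sufficiently large drives the additive noise term below any desired $\varepsilon/2$ threshold, after which iterating for $K$ steps as in Theorem~\ref{MDTHm} gives $J(\boldsymbol\theta_K)-J(\boldsymbol\theta^*)\le\varepsilon$.

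To make the iteration well defined, I would simultaneously maintain stability of every iterate $\boldsymbol\theta_k$ by induction: since Assumption~\ref{stable_search} holds at $k=1$, and since the cost strictly decreases along the perturbed trajectory with high probability, each iterate lies in the sublevel set $\{\boldsymbol\theta:J(\boldsymbol\theta)\le J(\boldsymbol\theta_1)\}$, which is contained in the stable region where Lemmas~\ref{CLip}--\ref{lemma:LLG} apply with uniform constants. The probability of success at iteration $k$ is at least $1-(2d_xd_u/\varepsilon(L,T))^{-2d_xd_u}$, and a union bound over the $K$ iterations, absorbed into the polynomial degradation of $\varepsilon(L,T)$, yields the overall high-probability guarantee stated in the theorem. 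The natural-gradient case proceeds identically after replacing the covariance with its empirical counterpart from Lemma~\ref{lemma:rollout} and absorbing $\|\tilde{\boldsymbol\Sigma}_{\boldsymbol\theta}-\boldsymbol\Sigma_{\boldsymbol\theta}\|\le\varepsilon(T)$ into the perturbation term.

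The main obstacle is the joint control of stability, cost decrease, and estimator accuracy: the Lipschitz constants $L_2,L_3$ and the PL constant $L_1$ all depend on the current iterate through the covariance $\boldsymbol\Sigma_{\boldsymbol\theta_k}$ and on bounds for $\mathbf M(\boldsymbol\theta_k)$, so the induction must keep these quantities uniformly bounded along the trajectory while simultaneously ensuring that $\varepsilon(L,T)$ is small relative to the local radius $\varepsilon(\boldsymbol\theta_k)$ of Lemmas~\ref{CLip} and~\ref{lemma:LLG}. This is the same delicate coupling that arises in~\cite{fazel2018global,malik2020derivative}, and I would closely mirror their polynomial-in-$(1/L,1/T,1/\varepsilon)$ bookkeeping to complete the argument.
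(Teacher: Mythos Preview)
Your proposal is correct and follows essentially the same route as the paper: bound the empirical-gradient error via Lemmas~\ref{lemma:rollout} and~\ref{lemma:ber}, propagate that error through the one-step descent of Theorem~\ref{MDTHm} using the local Lipschitz Lemmas~\ref{CLip}--\ref{lemma:LLG}, and iterate. The only cosmetic difference is that the paper absorbs the additive noise into a halved contraction factor $(1-\tfrac{1}{2}\eta\eta_{\max}^{-1})$ whenever the suboptimality gap exceeds $\varepsilon(L,T)$, whereas you keep the additive-noise recursion explicitly; these are equivalent formulations of the same inexact-gradient argument.
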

\begin{proof}
From~\cite[Theorem 31]{fazel2018global} and Theorem~\ref{MDTHm},  one has the following inequality  at iteration $K\in \mathbb{N}$  for a sufficiently small step size $\eta \leq \eta_{max}$,  
$J (\boldsymbol \theta_{K+1})-J (\boldsymbol \theta^*)\leq (1-\eta\eta_{max}^{-1})(J (\boldsymbol \theta_K)-J (\boldsymbol \theta^*))$.
At iteration $K$, denote by $\tilde{\nabla}_K$  the empirical  gradient and by $\hat{\boldsymbol{\theta}}_{K+1}=\boldsymbol{\theta}_K-\eta \tilde{\nabla}_K$ the update with  the empirical  gradient.  From Lemma~\ref{CLip},   $|J (\hat{\boldsymbol \theta}_{K+1})-J (\boldsymbol \theta_{K+1})| \leq  \frac{1}{2}\eta \eta_{max}^{-1}\varepsilon(L,T)$, when $\rVert\hat{\boldsymbol \theta}_{K+1}-\boldsymbol{ \theta}_{K+1} \lVert\leq \frac{1}{2}\eta \eta_{max}^{-1}\varepsilon(L,T) (1/L_2(\boldsymbol{ \theta}_{K+1}))$,  upon noting that  $\hat{\boldsymbol{ \theta}}_{K+1}-\boldsymbol{ \theta}_{K+1}=\eta(\nabla_K-\tilde{\nabla}_K)$ and
$\rVert \nabla_{K}-\tilde{\nabla}_{K}\lVert\leq \frac{1}{2} \eta_{max}^{-1}\varepsilon(L,T) (1/L_2(\boldsymbol{ \theta}_{K+1}))$.
According to the Bernstein inequality,  the above inequality holds with  a probability greater than $1-({\frac{2d_xd_u}{\varepsilon(L,T)}})^{-2d_xd_u}$. Therefore, from Lemmas~\ref{lemma:LLG} and~\ref{lemma:ber}, the distance between the empirical  gradient and  the exact one monotonically decreases as the number of samples and rollouts increases, provided that the smoothing factor $r$ is sufficiently small.  Consequently, one arrives at
$J (\hat{\boldsymbol \theta}_{K+1})-J (\boldsymbol \theta^*)\leq(1-\frac{1}{2}\eta \eta_{max}^{-1})(J (\boldsymbol \theta_K)-J (\boldsymbol \theta^*))$, when $ J (\boldsymbol \theta_K)-J (\boldsymbol \theta^*) \leq \varepsilon(L,T)$. This recursion is contractive; i.e.,  the rest of the proof will be similar to that of Theorem~\ref{MDTHm}. 
\end{proof}

\section{Simulations}\label{sec:numerical}
In this section, simulations are conducted to  demonstrate the  global convergence of the proposed gradient methods. To compute the Nash policy,  plotted in dashed lines  in the figures, we use  the solution of equation~\eqref{eq:riccati-bar-m}.

\textbf{Example 1.} Consider a dynamic game with the following parameters:  $\eta=0.1$, $n=100$, $T=100$, $L=3$, $A=0.7, B=0.4, \bar{A}=0, \bar{B}=0, Q=1, R=1,S^x=4, S^u=0, \bar{Q}=0$,  $\bar{R}=0$, $\Sigma_x=1$ and $\Sigma_w=0.4$.  It is observed in Figure~\ref{fig:PN} that natural  policy gradient descent reaches  the  Nash strategy  faster than   the gradient descent. 
\begin{figure}[t!]
	\hspace{0cm}
	\scalebox{1}{
	\includegraphics[ trim={0cm 13.6cm 0 7cm},clip,width=\linewidth]{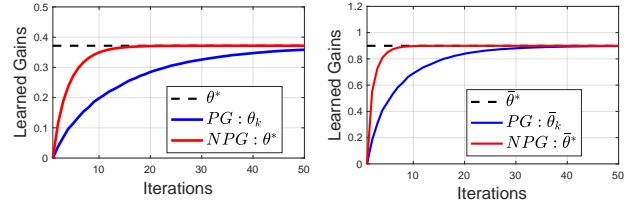}}
	\caption{Convergence of the model-based gradient descent and natural policy gradient descent algorithms in Example 1.}\label{fig:PN}
\end{figure}

\textbf{Example 2.} Let the system parameters be $\eta=0.04$, $n=10$, $T=10$, $r=0.09$, $L=1500$, $A=1, B=0.5, \bar{A}=0, \bar{B}=0, Q=1, R=1,S^x=2, S^u=0, \bar{Q}=1$, $\bar{R}=0$,  $r=0.09$, $\Sigma_x=0.05$ and $\Sigma_w=0.01$. The model-free policy gradient algorithm  was run on a 2.7 GHz Intel Core  i5 processor for $10$  random seeds.  After $6000$ iterations, which took roughly $10$ hours,  both $\theta$ and $\bar{ \theta}$ reached their optimal values   as depicted in Figure~\ref{fig:model_free}. 
\begin{figure}[t!]
	\centering 
	\includegraphics[trim={0cm 8.5cm 0 8.7cm},clip,width=\linewidth]{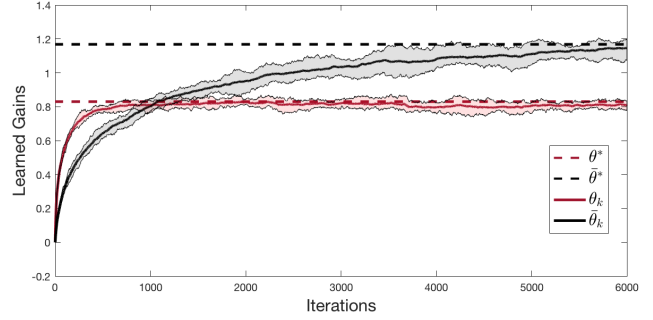}
	\caption{Convergence of the proposed model-free algorithm in Example 2.}\label{fig:model_free}
\end{figure}

 \begin{figure}[t!]
\centering
\includegraphics[ trim={0cm 9.3cm 0 9.4cm},clip,width=\linewidth]{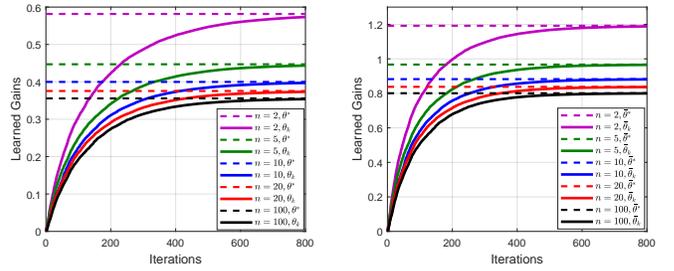}
\caption{The effect of the number of players on the policy~in Example~3.}\label{fig:bar}
\end{figure}
 
\textbf{Example 3.}  In this example, let the system parameters be $\eta=0.1$,  $T=100$, $L=3$, $A=0.8, B=0.2, \bar{A}=0, \bar{B}=0, Q=1, R=1,S^x=2, S^u=0, \bar{Q}=4$, $\bar{R}=0$,  $\Sigma_x=1$ and $\Sigma_w=0.1$. To investigate the effect of the number of players,  we  considered five different values for $n \in \{ 2, 5, 10,20,100\}$. It is shown in Figure~\ref{fig:bar} that the policies converge to a limit as the number of players increases, which is known as the mean-field limit. 

\section{Conclusions}\label{sec:conclusion}
In this paper, we  investigated  model-based and model-free gradient descent and natural policy gradient descent algorithms for LQ deep structured  games with  homogeneous weights.   It was shown theoretically and  verified by simulations, that the gradient-based methods enjoy the global convergence to the sequential Nash solution. 
  One of the main  features of the proposed  solutions  is that  their planning space  is independent of  the number of players. 
  The obtained results naturally extend to asymptotically vanishing weights and other variants of policy gradient algorithms such as REINFORCE and actor-critic methods.

\bibliographystyle{IEEEtran} 
\bibliography{Jalal_Ref}

\appendices
\section{Proof of Lemma~\ref{lemma:gradient_formulation}}\label{sec:proof_lemma:gradient_formulation}
To compute the best-response of the learner,  we fix the strategies of  other players, and then find the gradient of the cost function with respect to $\theta$ and $\bar \theta$. Suppose  player $i\in \mathbb{N}_n$ uses the strategy $u^i_t=\theta^i x^i_t+(\bar \theta^i -\theta^i) \bar x_t$. Therefore, one has:
\begin{equation}
\mathbf u^i_t= \begin{bmatrix}
(1-\frac{1}{n}) \theta^i & (1-\frac{1}{n}) \bar \theta^i\\
\frac{1}{n} \theta^i & \frac{1}{n} \bar{\theta}^i
\end{bmatrix}
\mathbf x^i_t
+ \sum_{j \neq i}
 \begin{bmatrix}
-\frac{1}{n} \theta^j & -\frac{1}{n} \bar \theta^j\\
\frac{1}{n} \theta^j & \frac{1}{n} \bar{\theta}^j
\end{bmatrix}
\mathbf x^j_t.
\end{equation}
From~\eqref{cost_function} and \eqref{perstep}, 
$ J_{\mathbf x^i_1}( \boldsymbol \theta)=\Exp{(\mathbf x^i_1)^{\intercal}\mathbf Q \mathbf x^i_1
+(\mathbf u^i_1)^{\intercal}\mathbf R \mathbf u^i_1}+  \gamma  J_{ \mathbf x^i_2}( \boldsymbol \theta)$ 
$=\Exp{(\mathbf x^i_1)^{\intercal}\mathbf Q \mathbf x^i_1
+(\mathbf u^i_1)^{\intercal}\mathbf R \mathbf u^i_1} +  \gamma  \Exp{(\mathbf x^i_2)^{\intercal} \mathbf M(\boldsymbol \theta)  \mathbf x^i_2 }$.
Taking the derivatives with respect to $\theta^i$ and $\bar \theta^i$, and  then making $\theta^i=\theta^j=\theta$ and $\bar \theta^i=\bar \theta^j=\bar \theta$, leads to:
\begin{equation}\label{eq:nabla1}
\begin{cases}
 \nabla_\theta J_{\mathbf x_1}(\boldsymbol \theta)= 2\Big((1-\frac{1}{n}) (\mathbf R^{1,1} +\gamma {\mathbf B^{1,1}}^\intercal \mathbf M^{1,1}(\boldsymbol \theta) \mathbf B^{1,1})\\
 + \frac{1}{n}( \mathbf R^{2,1} +\gamma {\mathbf B^{2,1}}^\intercal  \mathbf M^{2,1}(\boldsymbol \theta) \mathbf B^{2,1})\Big)\theta \Exp{\Delta x_1 \Delta x_1^\intercal} \\
+2\Big( (1-\frac{1}{n})(\mathbf R^{1,2} +\gamma {\mathbf B^{1,2}}^\intercal  \mathbf M^{1,2}(\boldsymbol \theta) \mathbf B^{1,2}) + \frac{1}{n} (\mathbf R^{2,2}\\
 +\gamma {\mathbf B^{2,2}}^\intercal  \mathbf M^{2,2}(\boldsymbol \theta) \mathbf B^{2,2}) \Big)\bar \theta \Exp{\bar x_1 \Delta x_1^\intercal}+ \gamma \nabla_\theta J_{ \mathbf x_2}(\boldsymbol \theta),\\
 \nabla_{\bar \theta} J_{\mathbf x_1}(\boldsymbol \theta)= 2\Big((1-\frac{1}{n}) (\mathbf R^{1,1} +\gamma {\mathbf B^{1,1}}^\intercal \mathbf M^{1,1}(\boldsymbol \theta) \mathbf B^{1,1})\\
 + \frac{1}{n}( \mathbf R^{2,1} +\gamma {\mathbf B^{2,1}}^\intercal  \mathbf M^{2,1}(\boldsymbol \theta) \mathbf B^{2,1})\Big) \theta \Exp{\Delta x_1 \bar x_1^\intercal} \\
+2\Big( (1-\frac{1}{n})(\mathbf R^{1,2} +\gamma {\mathbf B^{1,2}}^\intercal  \mathbf M^{1,2}(\boldsymbol \theta) \mathbf B^{1,2}) \\
+ \frac{1}{n} (\mathbf R^{2,2}  \hspace{-.1cm}+ \hspace{-.1cm}\gamma {\mathbf B^{2,2}}^\intercal  \mathbf M^{2,2}(\boldsymbol \theta) \mathbf B^{2,2}) \Big) \bar \theta \Exp{\bar x_1 \bar x_1^\intercal}\hspace{-.1cm}+\hspace{-.1cm} \gamma \nabla_{\bar \theta} J_{\mathbf x_2}(\boldsymbol \theta). 
\end{cases}
\end{equation}
The  the rest of the proof follows from the recursive application of~\eqref{eq:nabla1} and equations~\eqref{eq:riccati-bar-m},~\eqref{eq:breve-f} and~\eqref{eq:gradient_matrices}.
\section{Proof of Lemma~\ref{lemma:GD}}\label{sec:proof_lemma:GD}
Let $\tilde{\mathbf P}_n:=\DIAG((1-\frac{1}{n}) \mathbf{I}_{d_u \times d_u}, \frac{1}{n} \mathbf{I}_{d_u \times d_u})$.  We express~\eqref{eq:gradient_1} in terms of the  square matrix $\tilde{\mathbf P}_n$ such that $\Delta J_\theta(\boldsymbol \theta)=:\Delta J^1_\theta(\boldsymbol \theta)+\Delta J^2_\theta(\boldsymbol \theta) $ and $\Delta J_{\bar \theta}(\boldsymbol \theta)=: \Delta J^1_{\bar{\theta}}(\boldsymbol \theta)+\Delta J^2_{\bar{\theta}}(\boldsymbol \theta) $, where
\begin{equation}
\nabla_{\boldsymbol \theta} \tilde J:=
\begin{bmatrix}
\Delta J^1_\theta(\boldsymbol \theta) & \Delta J_{\bar \theta}(\boldsymbol \theta)\\
\Delta J^2_\theta(\boldsymbol \theta) & \Delta J_{\bar \theta}(\boldsymbol \theta)
\end{bmatrix}
=
2\tilde{\mathbf P}_n \mathbf E_{\boldsymbol \theta} \boldsymbol \Sigma_{\boldsymbol \theta}. 
\end{equation}
Following~\cite[Lemma 10]{fazel2018global} and after some algebraic manipulations,   we can derive the following inequality for  sequences   $\{\mathbf{x}^\ast_t\}_{t=1}^\infty$ and $\{\mathbf{u}^\ast_t\}_{t=1}^\infty$ generated by the Nash policy $\boldsymbol \theta^*$. In particular, from~\eqref{eq:gradient_1},   $\Sigma_{\boldsymbol \theta}\succcurlyeq \Exp{(\mathbf x_1 \mathbf x_1^\intercal)}$, and the fact that $\tilde{\mathbf P}_n$ is positive definite for any finite~$n$, it  results that:
$
J(\boldsymbol \theta)- J(\boldsymbol \theta^*) \leq (1-\gamma)\mathbb{E}\sum_{t=1}^\infty \gamma^{t-1} \TR({\mathbf{x}^\ast_t}{\mathbf{x}^\ast_t}^\intercal \mathbf{E}_{\boldsymbol \theta}^\intercal(\mathbf R+\gamma\mathbf B^{\intercal}\mathbf M_{\boldsymbol \theta}\mathbf B)^{-1}\mathbf{E}_{\boldsymbol \theta})=\TR(\Sigma_{\boldsymbol \theta^*}\mathbf{E}_{\boldsymbol \theta}^\intercal(\mathbf R+\gamma\mathbf B^{\intercal}\mathbf M_{\boldsymbol \theta}\mathbf B)^{-1}\mathbf{E}_{\boldsymbol \theta})
\leq
    \frac{\norm{\boldsymbol \Sigma_{\boldsymbol \theta^*}}}{\sigma_{\text{min}}(\mathbf{R})}\TR(\mathbf{E}_{\boldsymbol \theta}^\intercal\mathbf{E}_{\boldsymbol \theta})
    = \frac{\norm{\boldsymbol \Sigma_{\boldsymbol \theta^*}}}{4\sigma_{\text{min}}(\mathbf{R})}\TR({\boldsymbol \Sigma_{\boldsymbol \theta}}^{-1} \nabla_{\boldsymbol \theta} \tilde J^\intercal   \mathbf{P}_n^{-2}  \nabla_{\boldsymbol \theta} \tilde J {\boldsymbol \Sigma_{\boldsymbol \theta}}^{-1})    
	\leq$ $ 
	  \frac{0.25\mu^{-2}\norm{\boldsymbol \Sigma_{\boldsymbol \theta^*}}}{\sigma_{\text{min}}(	\tilde{\mathbf P_n})^2\sigma_{\text{min}}(\mathbf{R})}$  $ \NF{[\nabla_{\theta} J(\boldsymbol \theta), \nabla_{ \bar \theta} J( \boldsymbol \theta)]}^2.
$
	For $n=\infty$, $\tilde{\mathbf P}_n$ is not invertible; however, equation~\eqref{eq:riccati-bar-m} under Assumption~\ref{ass:decoupled} decomposes into two \emph{decoupled} standard Riccati equations with matrices $(A,B,Q,R)$ and $(A,B,Q+S^x,R+S^u)$.  By following the approach proposed in~\cite[Lemma 11]{fazel2018global},  it is straightforward to show  that the cost difference in this case is upper bounded by:   $\frac{\norm{\boldsymbol \Sigma^{1,1}_{\boldsymbol \theta^*}}}{4\sigma_{min}(\text{cov}(x_1))^2\sigma_{\text{min}}(R)} \|\nabla_{ \theta} J(\boldsymbol \theta)\|_F^2$ + $\frac{\norm{\boldsymbol \Sigma^{2,2}_{\boldsymbol \theta^*}}}{4\sigma_{min}(\mathbb{E}[x_1]\mathbb{E}[x_1]^\intercal)^2\sigma_{\text{min}}(R+S^u)} \|\nabla_{\bar  \theta} J(\boldsymbol \theta)\|_F^2$.

\end{document}